\documentclass{article}
\usepackage{fullpage}

\usepackage{amssymb}

\newcommand{\R}{\ensuremath{\mathbb{R}}}

\usepackage{amsmath}

\newcommand{\cbra}[1]{\left\{#1\right\}}

\usepackage{amsthm}
\usepackage[hidelinks]{hyperref}
\hypersetup{
    colorlinks=true,
    linkcolor=blue,
    filecolor=magenta,
    urlcolor=cyan,
    citecolor=magenta,
}
\usepackage{thmtools}
\usepackage{thm-restate}
\usepackage{cleveref}

\newtheorem{theorem}{Theorem}[section]
\newtheorem{lemma}[theorem]{Lemma}

\newtheorem{definition}[theorem]{Definition}
\newtheorem{corollary}[theorem]{Corollary}
\newtheorem{claim}[theorem]{Claim}
\newtheorem{fact}[theorem]{Fact}
\newtheorem{conjecture}[theorem]{Conjecture}

\newcommand{\norm}[1]{\ensuremath{\left\|#1\right\|}}
\newcommand{\zone}{\{0,1\}}
\newcommand{\adeg}{\widetilde{\mathrm{deg}}}
\newcommand{\PDT}{\mathsf{PDT}}
\newcommand{\DT}{\mathsf{DT}}
\newcommand{\OR}{\mathsf{OR}}
\newcommand{\XOR}{\mathsf{XOR}}
\newcommand{\SWk}{\mathsf{SW}_k}

\usepackage{tikz}

\usepackage{xcolor}
\usepackage[dvipsnames]{xcolor}

\usepackage{subfiles}

\title{Subcube Stifling}

\author{Arjan Cornelissen\thanks{Simons Institute for the Theory of Computing, University of California, Berkeley, United States of America {\tt ajcornelissen@outlook.com}}
\and
Nikhil S.~Mande\thanks{University of Liverpool, UK {\tt mande@liverpool.ac.uk}}
\and 
Nithish Raja\thanks{Eindhoven University of Technology, Netherlands {\tt n.r.raja@tue.nl}}
}

\date{}
\begin{document}
\maketitle

\begin{abstract}
We introduce the \emph{subcube stifling number}, a new combinatorial measure of total Boolean functions. This measure is the largest integer $k$ such that, for every set $S$ of at most $k$ input variables and every assignment $b \in \{0,1\}^S$, there is a fixing of the variables outside $S$ under which the resulting function on the free variables $S$ is the point indicator $\mathbb{I}[x_S=b]$. Equivalently, for every small set of coordinates, the function can isolate any prescribed point of the corresponding Boolean cube by suitably fixing all remaining coordinates. This measure is inspired by the stifling number of Chattopadhyay et al.~(ITCS'23); whereas their measure asks for restrictions realizing every constant function, ours asks for restrictions realizing every point indicator.
Our results are as follows.
\begin{itemize}
    \item We show that the subcube stifling number gives rise to an approximate-degree composition theorem. In particular, if a Boolean function $f$ satisfies $\adeg(f)=O(\sqrt{\mu(f)})$, then for every Boolean function $g$, approximate degree composes tightly:
    \[
        \adeg(f \circ g)=\Theta(\adeg(f)\adeg(g)).
    \]
    This motivates the study of the subcube stifling number, and in particular the search for functions whose approximate degree is $O(\sqrt{\mu(f)})$.
    \item We show using a standard probabilistic argument that a random Boolean function on $n$ input bits has subcube stifling number \(\Theta(\log n)\) with high probability.
    \item Chattopadhyay et al.~showed that the $\mathsf{Majority}$ function has linear stifling number, and no Boolean function has larger stifling number. In contrast, the subcube stifling number of $\mathsf{Majority}$ is easily seen to be 0. This raises the question whether there even exist Boolean functions with linear subcube stifling number.  
    
    We show that this is indeed the case. Our examples are obtained from indicators of linear codes over $\mathbb{F}_2$ whose minimum distance and dual distance are both linear.
    \item We prove that the functions arising from this linear-code construction do not have approximate degree $O(\sqrt{\mu(f)})$; in fact, they have approximate degree $\Omega(\mu(f))$.
\end{itemize}

The main question left open is whether there exists a Boolean function $f$ with $\adeg(f)=\Theta(\sqrt{\mu(f)})$. A positive answer would yield new instances of tight approximate-degree composition with $f$ as the outer function.
\end{abstract}

\newpage

\section{Introduction}

We define a combinatorial measure called the \emph{subcube stifling number} of a total Boolean function $f : \{0,1\}^n \to \{0,1\}$, which for simplicity we denote by $\mu(f)$ throughout this paper. Informally $\mu(f)$ is the largest value of $k$ such that for all sets $S$ of size at most $k$ and all choices of $b \in \zone^k$, there is a way to set the variables outside $S$ such that the restricted function equals the point function $\mathbb{I}[y = b]$.
Chattopadhyay et al.~\cite{Chattopadhyay_Mande_Sanyal_Sherif_2023} defined a related notion called the \emph{stifling number} of a Boolean function: it is the largest $k$ such that for every subset $S$ of at most $k$ variables and every $b \in \zone$, there exists a setting of the variables outside $S$ such that the restricted function equals the constant $b$. 
Our measure, the subcube stifling number, differs from the stifling number in that it requires the restricted function to behave like an indicator function instead of a constant. We refer the reader to Section~\ref{sec:prelims} for formal definitions of subcube stifling number and stifling number.

Chattopadhyay et al.~\cite{Chattopadhyay_Mande_Sanyal_Sherif_2023} showed a lifting theorem involving stifling number: if the stifling number of a gadget $g$ is large, then the decision tree complexity of $f$ lifts to parity decision tree complexity of $f \circ g$.\footnote{They showed a stronger version of this statement, but we only state the depth-to-PDT-depth lifting result for simplicity.}

\begin{theorem}[{\cite[Theorem 4]{Chattopadhyay_Mande_Sanyal_Sherif_2023}}]\label{thm: pdt depth lifting}
Let $g:\{0,1\}^m \to \{0,1\}$ be a $k$-stifled function, and let $f \subseteq \zone^n \times \mathcal{R}$ be a relation. Then $\PDT(f \circ g) \geq \DT(f) \cdot k$.
\end{theorem}

Interestingly, our variant $\mu$ also yields a lifting theorem with a very different flavor. In order to motivate our lifting theorem, we first define approximate degree.

\begin{definition}\label{def:adeg}
    Let $f:\{0,1\}^n\rightarrow\{0,1\}$ be a Boolean function. Then, a polynomial $p$ is an $\varepsilon$-approximating polynomial of $f$ if it satisfies the following.
    \begin{equation*}
        |f(x) - p(x)| \le \varepsilon, \qquad \forall x\in\{0,1\}^n.
    \end{equation*}
    The $\varepsilon$-approximate degree of $f$, denoted by $\adeg_\varepsilon(f)$, is defined as $\min_p\deg(p)$ such that the polynomial $p$ satisfies the above condition. When $\varepsilon = 1/3$, we denote approximate degree by $\adeg(f)$.
\end{definition}

Investigating how complexity measures behave under composition has been a widely studied area of research \cite{Saks_Wigderson_1986, Nisan_Szegedy_1994, Jukna_Razborov_Savicky_Wenger_1999, Ambainis_2005, Tal_2013, Gilmer_Saks_Srinivasan_2016}.
Composition of approximate degree is no exception. Sherstov~\cite{Sherstov_2013} showed that for all Boolean $f, g$ the upper bound $\adeg(f \circ g) = O(\adeg(f) \cdot \adeg(g))$ holds. Showing a matching lower bound remains a significant open problem. Nevertheless, there has been some progress on this front. Ben-David et al.~\cite{Ben-David_Bouland_Garg_Kothari_2018} showed that $\adeg(f \circ g) = \Omega(\adeg(f) \cdot \adeg(g))$ for all $g$, whenever $f$ is a symmetric function. Chakraborty et al.~\cite{Chakraborty_Kayal_Mittal_Paraashar_Saurabh_2024} showed that the same lower bound holds for all recursively defined $f$.

It is well known that the approximate degree of the $n$-variate $\mathsf{AND}$ function is $\Theta(\sqrt{n})$~\cite{Nisan_Szegedy_1994}.
It follows fairly easily that for all Boolean functions $f : \zone^n \to \zone$, we have $\adeg(f) = \Omega(\sqrt{\mu(f)})$. This is because $f$ embeds an $\mathsf{AND}$ function on $\mu(f)$ many variables (in fact it embeds several indicators on $\mu(f)$ many variables).
We show that if this naive lower bound is tight, then approximate degree composition holds for $f$. That is,

\begin{restatable}{theorem}{adeglifting}\label{thm: subcube stifling adeg lifting}
    Let $f : \zone^n \to \zone$ be a Boolean function. Then, for all Boolean functions $g$ we have
    \[
    \adeg(f \circ g) = \Omega(\sqrt{\mu(f)} \adeg(g)).
    \]
\end{restatable}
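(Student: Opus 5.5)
Write $k = \mu(f)$, and assume $k \ge 1$; when $k=0$ the bound says nothing. The plan is to embed $\mathsf{AND}_k \circ g$, or $\mathsf{OR}_k \circ g$, into $f \circ g$ by a restriction. We then invoke the known tight composition for symmetric outer functions due to Ben-David et al.~\cite{Ben-David_Bouland_Garg_Kothari_2018}, which gives $\adeg(\mathsf{AND}_k \circ g) = \Omega(\adeg(\mathsf{AND}_k)\adeg(g)) = \Omega(\sqrt{k}\,\adeg(g))$. Approximate degree cannot increase under restriction: fixing some inputs of an approximating polynomial yields an approximating polynomial of no larger degree. It therefore suffices to realize $\mathsf{AND}_k \circ g$ as a restriction of $f \circ g$.

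\textbf{Step 1 (restriction of $f$).} Choose any set $S$ of $k$ coordinates and take $b = 1^S$. By the definition of $\mu(f)$ there is an assignment $z$ to the coordinates outside $S$ with $f(x_S, z) = \mathbb{I}[x_S = 1^S] = \mathsf{AND}_k(x_S)$ for all $x_S$.

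\textbf{Step 2 (lifting the restriction through $g$).} For each coordinate $i \notin S$ we need an input $w_i$ of $g$ with $g(w_i) = z_i$. If $g$ is nonconstant, both values are attained and such $w_i$ exist. We fix the $i$-th block of $f\circ g$ to $w_i$ for every $i\notin S$. The resulting function on the blocks indexed by $S$ is exactly $\mathsf{AND}_k \circ g$. If $g$ is constant, then $\adeg(g)=0$ and the statement is trivial. Since $\mathsf{AND}_k$ is symmetric, Ben-David et al.\ give $\adeg(\mathsf{AND}_k\circ g)=\Omega(\sqrt{k}\,\adeg(g))$, and combining with Steps 1 and 2 finishes the proof.

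There is no real obstacle here. The only point needing care is checking that the cited composition theorem applies at the constant-error level $1/3$ used in Definition~\ref{def:adeg}. It does, since it is stated for symmetric $f$ and arbitrary $g$. As an alternative that avoids the full symmetric-function theorem, we could cite the earlier tight bound $\adeg(\mathsf{OR}_k\circ g)=\Theta(\sqrt{k}\,\adeg(g))$ and use $b = 0^S$. That choice gives $f(x_S,z) = \mathbb{I}[x_S=0^S] = \neg\mathsf{OR}_k(x_S)$, and negation does not change approximate degree.
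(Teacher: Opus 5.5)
Your proof is correct, and it is much shorter than the paper's.

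The paper re-runs the Ben-David et al.\ framework on all $n$ outer coordinates. It turns the $O(n/\sqrt{k})$-query parity algorithm of \Cref{thm:cmprs}, which uses subcube-indicator queries of size at most $k$, into a polynomial and makes it robust via \Cref{thm:robust_polynomials}. It then substitutes, for each variable $z_{S,b}$, a degree-$T$ restriction of an approximator for $f\circ g$ that computes $\mathbb{I}[x_S=b]\circ g$. This is where the full hypothesis ``for every $|S|\le k$ and every $b$'' is used. The result is a degree-$O(Tn/\sqrt{k})$ approximator for $\XOR_n\circ g$, and \Cref{thm:XOR_composition} then forces $T=\Omega(\sqrt{k}\,\adeg(g))$.

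You instead make a single restriction to $\mathsf{AND}_k$, the same embedding the paper uses in \Cref{claim:bounds_due_to_OR}, and use the $\mathsf{OR}$-composition theorem of Ben-David et al.\ as a black box. Internally, that theorem runs the same robust-polynomial and $\XOR$-composition argument on the $k$ embedded coordinates, with Belovs's $O(\sqrt{k})$-query parity algorithm in place of \Cref{thm:cmprs}. The ratio $k/\sqrt{k}$ matches the paper's $n/(n/\sqrt{k})$, so you lose nothing.

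Your route buys simplicity and generality. It never needs \Cref{thm:cmprs}. Because it uses only one pair $(S,1^S)$ or $(S,0^S)$, it proves the bound with $\mu(f)$ replaced by the largest $k$ such that some restriction of $f$ equals $\mathsf{AND}_k$ or $\mathsf{OR}_k$ up to negation of the output. That quantity is always at least $\mu(f)$ and can be far larger; for example, $\mathsf{OR}_n$ has $\mu=0$. In effect, you show the theorem is a corollary of Ben-David et al.\ plus the trivial $\mathsf{AND}$ embedding. The paper's route is the one in which subcube stifling arises natively, but as stated it gives no quantitative advantage over yours.

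Your $\mathsf{OR}$ fallback is the cleanest citation. It rests only on \cite[Theorem~1]{Ben-David_Bouland_Garg_Kothari_2018} and the fact that negating the output does not change approximate degree. Your sentence justifying the error level does not follow from what precedes it, but the point itself is fine: that work uses the same $1/3$-error approximate degree.
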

In particular, this implies that if a Boolean function $f$ satisfies $\adeg(f) = \Theta(\sqrt{\mu(f)})$, then approximate degree composition holds for all inner $g$, i.e., $\adeg(f \circ g) = \Theta(\adeg(f)\adeg(g))$ (the upper bound follows from~\cite{Sherstov_2013}).
This result is similar in flavor to one of the main results of~\cite{Chakraborty_Kayal_Mittal_Paraashar_Sanyal_Saurabh_2023}, who showed that $\adeg(f \circ g) = \widetilde\Omega(\sqrt{\mathsf{bs}(f)} \adeg(g))$, where $\mathsf{bs}(\cdot)$ denotes block sensitivity. We refer the reader to \Cref{sec:comparison} for a comparison of our result with theirs.

Our lower bound proof is a reasonably straightforward adaptation of the proof of~\cite[Theorem~1]{Ben-David_Bouland_Garg_Kothari_2018}, who showed that $\adeg(\OR \circ g) = \Theta(\adeg(\OR) \cdot \adeg(g))$ for all Boolean functions $g$. As a crucial component of their proof, they use a result of Belovs~\cite{Belovs_2015} which shows that the parity of any $n$-bit string can be computed quantumly using $\Theta(\sqrt{n})$ given query access to ORs of arbitrary subsets of variables. In contrast, we use a recent result~\cite{Cornelissen_Mande_Patro_Raja_Sanyal_2025} which shows that the parity of any $n$-bit string can be computed quantumly using $\Theta(n/\sqrt{k})$ queries, given query access to $\mathbb{I}[x_S = b]$ for all $S \subseteq [n]$ with $|S| \leq k$ and all $b \in \zone^S$. The rest of our proof follows exactly the same framework as that of~\cite[Theorem~1]{Ben-David_Bouland_Garg_Kothari_2018}.

We find it very interesting that stifling number and our subcube stifling number look fairly similar, yet give rise to wildly different lifting theorems. On the one hand, stifling number plays a role in lifting decision-tree complexity as a measure of the inner gadget (\Cref{thm: pdt depth lifting}). On the other hand, subcube stifling number plays a role in an approximate degree composition result as a measure of the outer function (\Cref{thm: subcube stifling adeg lifting}).
For these reasons, we find it worthwhile to perform a systematic study of subcube stifling number.
Our results are summarized in the next subsection.

\subsection{Our results}

We make the following contributions in this work.

\begin{itemize}
    \item We show Theorem~\ref{thm: subcube stifling adeg lifting} using a reasonably simple adaptation of the proof of~\cite[Theorem~1]{Ben-David_Bouland_Garg_Kothari_2018}. This motivates the study of subcube stifling number. By noting that $f$ must embed an $\mathsf{AND}$ on $\mu(f)$ variables, we observe the following relationships between subcube stifling number and standard complexity measures of Boolean functions such as sensitivity and degree (see \Cref{claim:bounds_due_to_OR}): $\mathsf{s}(f) = \Omega(\mu(f))$, $\deg(f) = \Omega(\mu(f))$ and $\lambda(f) = \Omega(\sqrt{\mu(f)})$, see~\Cref{fig:complexity-measures}. Moreover, we exhibit functions witnessing tightness of all of these bounds (see last bullet below).

    \begin{figure}[!ht]
        \centering
        \begin{tikzpicture}[measure/.style = {blue, draw, rounded corners = .3em}]
            \node[measure] (lambda) at (0,0) {$\lambda$};
            \node[measure] (adeg) at (1,1) {$\widetilde{\deg}$};
            \node[measure] (s) at (-1,1) {$\mathsf{s}$};
            \node[measure] (bs) at (-2,2) {$\mathsf{bs}$};
            \node[measure] (Q) at (0,2) {$\mathsf{Q}$};
            \node[measure] (deg) at (2,2) {$\deg$};

            \draw (lambda) to (s) to (bs);
            \draw (lambda) to (adeg) to (Q);
            \draw (adeg) to (deg);
            \draw (lambda) to (Q);

            \draw (lambda) to (0,-.5) node[below] {$\sqrt{\mu}$};
            \draw (s) to (-1.5,.5) node[below left] {$\mu$};
            \draw (deg) to (2.5,1.5) node[below right] {$\mu$};

            \draw[red, dashed] plot [smooth] coordinates {(-.5,-.5) (-.5,2.5) (0.25,3) (2,1)};
        \end{tikzpicture}
        \caption{Overview of the complexity measures of Boolean functions considered here. We refer the reader to~\cite{Buhrman_de_Wolf_2002} and \Cref{def:spectral_sensitivity} for definitions of these measures. If two measures are connected, this means the upper one is always asymptotically bigger than the lower one for all Boolean functions. All measures above the red dashed line are at least $\Omega(\mu(\cdot))$, whereas all measures below are only known to be at least $\Omega(\sqrt{\mu(\cdot)})$.}
        \label{fig:complexity-measures}
    \end{figure}
    
    \item We observe in \Cref{claim:subcube_stifling_of_random_func} using standard probabilistic arguments that a random Boolean function $f : \zone^n \to \zone$ has subcube stifling number $\Theta(\log n)$ with high probability. Since all the aforementioned measures are linear for random Boolean functions with high probability, this rules out the possibility of $\mu$ upper bounding any of these measures in general.
    
    \item It is easy to show that the $\mathsf{Majority}$ function has stifling number linear in the number of inputs, and that no Boolean function can achieve a larger stifling number~\cite[Claim 26]{Chattopadhyay_Mande_Sanyal_Sherif_2023}. In contrast, $\mathsf{Majority}$ has subcube stifling number zero. More generally, it is not a priori clear whether \emph{any} Boolean function attains linear subcube stifling number. We show in \Cref{cor:mulinearlb} that such functions do exist: they can be realized as indicator functions of linear codes over $\mathbb{F}_2$ (viewed as subsets of $\zone^n$) whose minimum distance and dual distance are both linear in $n$.
    We prove in \Cref{thm:subcube-stifling_upper_bound_for_codes} that the subcube stifling number for such functions satisfy $\mu(f) = \min\{d,d^{\perp}\} - 1$, where $d$ and $d^{\perp}$ are the distance and dual distance of the code, respectively. We then use a known result on random linear codes (see \Cref{thm: dim random}) to show that they, with high probability, give rise to Boolean functions with linear subcube stifling number.
    
    \item In view of Theorem~\ref{thm: subcube stifling adeg lifting}, it is a clear goal to identify explicit families of Boolean functions $f$ with $\adeg(f) = \Theta(\sqrt{\mu(f)})$ in order to exhibit new functions for which approximate degree composition holds. Towards this, it is natural to ask if there even exists an $f$ with $\adeg(f) = \Theta(\sqrt{\mu(f)})$. We show in \Cref{cor:mu_lower_bounds_approx_deg} that any function $f$ obtained using the construction in the previous bullet (i.e., the indicator function of a subspace) does not satisfy this property: for such functions we show $\adeg(f) \geq d^{\perp} = \Omega(\mu(f))$.
    We show in \Cref{cor:mutightness} that when $f$ is the indicator function of a linear code obtained in the previous bullet, we have $\mathsf{s}(f) = O(\mu(f))$, $\deg(f) = O(\mu(f))$ and $\lambda(f) = O(\sqrt{\mu(f)})$, showing tightness of the bounds mentioned in the first bullet above.

\end{itemize}
The main question that remains open from our work is whether there exists a Boolean function $f$ such that $\adeg(f) = \Theta(\sqrt{\mu(f)})$. If such functions exist, then Theorem~\ref{thm: subcube stifling adeg lifting} would imply that approximate degree composition holds true with these functions as the outer function. 
We conjecture that no such function exists.
\begin{conjecture}\label{conj}
For all Boolean functions $f : \zone^n \to \zone$, $\adeg(f) = \Omega(\mu(f))$.
\end{conjecture}
In particular, the last bound in the last bullet above ($\lambda(f) = O(\sqrt{\mu(f)})$) also rules out a natural approach to resolving \Cref{conj} by replacing $\adeg$ by $\lambda$ in the conjecture (recall from \Cref{fig:complexity-measures} that $\adeg(f) \geq \lambda(f)$ for all $f$).
Towards proving \Cref{conj}, it might be beneficial to consider the following weaker statement, which we also conjecture to hold.

\begin{conjecture}
    For all Boolean functions $f : \zone^n \to \zone$, $\mathsf{Q}(f) = \Omega(\mu(f))$.
\end{conjecture}

\subsection{Comparison with prior work}\label{sec:comparison}

Chakraborty et al.~\cite{Chakraborty_Kayal_Mittal_Paraashar_Sanyal_Saurabh_2023} proved several results on approximate degree composition and related questions. The result most relevant to us is their composition lower bound
\[
    \adeg(f \circ g) = \widetilde{\Omega}\!\left(\sqrt{\mathsf{bs}(f)}\,\adeg(g)\right)
\]
for every inner function $g$, where $\mathsf{bs}(\cdot)$ denotes block sensitivity. Since $\mathsf{bs}(f) \geq \mu(f)$ for all $f$ (see \Cref{claim:bounds_due_to_OR}), their result subsumes our \Cref{thm: subcube stifling adeg lifting}, modulo logarithmic factors. Indeed, prior to their work, no result of the form $\adeg(f \circ g) = \Omega(M(f) \cdot \adeg(g))$ was known for some non-trivial complexity measure $M(\cdot)$~\cite[Section~2.2]{Chakraborty_Kayal_Mittal_Paraashar_Sanyal_Saurabh_2023}. While their work settled the question up to logarithmic losses, it left open whether such losses are avoidable for some measure $M(\cdot)$. Our result gives the first way of eliminating this loss via the subcube stifling number.

Their main technical ingredient behind the above theorem of~\cite{Chakraborty_Kayal_Mittal_Paraashar_Sanyal_Saurabh_2023} is a lower bound of the form
\[
    \adeg(\mathsf{PrOR}\circ(h_1,\ldots,h_n))
    =
    \widetilde{\Omega}\!\left(\sqrt{n}\min_{i\in[n]}\adeg(h_i)\right),
\]
where $\mathsf{PrOR}$ denotes the Promise-$\mathsf{OR}$ function which is promised to receive as input a string of Hamming weight at most 1, and behaves exactly like $\mathsf{OR}$ on these inputs.  
Their proof of this closely follows the strategy of Ben-David et al.~\cite[Theorem~16, arXiv version]{Ben-David_Bouland_Garg_Kothari_2018}, who established the corresponding statement when all the functions $h_i$ are identical, with the main additional work being a more delicate analysis needed to handle non-identical inner functions. Once this ingredient is in place, the remainder of their composition argument follows the proof of~\cite[Theorem~16, arXiv version]{Ben-David_Bouland_Garg_Kothari_2018} without further substantive changes. A central ingredient in the proof of~\cite[Theorem~16, arXiv version]{Ben-David_Bouland_Garg_Kothari_2018} is a quantum algorithm for computing the parity of an input string in a variant of the combinatorial group testing model~\cite[Theorem~19, arXiv version]{Ben-David_Bouland_Garg_Kothari_2018}, generalizing an algorithm of Belovs~\cite[Theorem~3.1]{Belovs_2015}. In contrast to the approach of~\cite{Chakraborty_Kayal_Mittal_Paraashar_Sanyal_Saurabh_2023}, our contribution is not a refinement of the analysis around this ingredient, but a replacement of the ingredient itself: we use a new quantum algorithm for computing parity with special types of queries~\cite{Cornelissen_Mande_Patro_Raja_Sanyal_2025}, rather than the group-testing parity algorithm at the heart of the Ben-David et al.~framework.

Regardless of applications to approximate degree composition, we feel that subcube stifling number is an independently interesting measure to study in its own right.

\section{Preliminaries}\label{sec:prelims}
All logarithms are taken base 2, unless mentioned otherwise. For a positive integer $n$, we use the notation $[n]$ to denote the set $\cbra{1, 2, \dots, n}$. For any subset $S \subseteq [n]$, we write its complement $\bar{S} = [n] \setminus S$. We use $\exp(x)$ to denote $e^{x}$, where $e$ is Euler's number and is $\approx 2.718$. We abuse notation and use $p(n) = O(q(n))$ ($= \Omega(q(n)), = \Theta(q(n))$ to mean $f(n) \in O(q(n))$ ($\in \Omega(q(n)), \in \Theta(q(n))$, respectively). We use the notation $\mathsf{supp}(f)$ to denote the set of elements in the domain of the function that map to a $1$ output. We denote the hamming weight of a binary string $x$ by $\mathsf{wt}_H(x)$ and we use $x_T$ to refer to the bits in $x$ restricted to the indices in $T\subseteq [n]$.

\begin{definition}\label{def:basic_boolean_functions}
    We define some standard Boolean functions below.
    \begin{itemize}
        \item{\makebox[4cm][l]{$\mathsf{OR}:\{0,1\}^n\rightarrow\{0,1\}: $} $ \mathsf{OR}(x) = 1 \iff \mathsf{wt}_H(x) > 0$,}
        \item{\makebox[4cm][l]{$\mathsf{AND}:\{0,1\}^n\rightarrow\{0,1\}: $} $ \mathsf{AND}(x) = 1 \iff \mathsf{wt}_H(x) = n$,}
        \item{\makebox[4cm][l]{$\mathsf{Maj}:\{0,1\}^n\rightarrow\{0,1\}: $} $ \mathsf{Maj}(x) = 1 \iff \mathsf{wt}_H(x) > n/2$,}
        \item{\makebox[4cm][l]{$\mathsf{XOR}:\{0,1\}^n\rightarrow\{0,1\}: $} $ \mathsf{XOR}(x) = 1 \iff \mathsf{wt}_H(x)$ is odd.}
    \end{itemize}
\end{definition}

Next we define composition of Boolean functions.

\begin{definition}[Composition of Boolean functions]\label{def:boolean_function_compostion}
    Let $f:\{0,1\}^n\rightarrow\{0,1\}$ and $g:\{0,1\}^m\rightarrow\{0,1\}$ be Boolean functions. The function $h:\{0,1\}^{mn}\rightarrow\{0,1\}$ is said to be a composition of $f$ and $g$, denoted by $f\circ g$, if it is defined as follows.
    \begin{equation*}
        h(x_{11}, \ldots, x_{1m}, \ldots, x_{n1}, \ldots, x_{nm}) = f(g(x_{11}, \ldots x_{1m}), \ldots, g(x_{n1}, \ldots, x_{nm})).
    \end{equation*}
\end{definition}

We define the \emph{subcube stifling number} of $f$ to be the largest integer $k$ such that, for every choice of at most $k$ input variables $S$ and every assignment $b$ to those variables, one can fix the remaining variables so that the restriction of $f$ on $\zone^S$ is the indicator function of $x_S = b$.

Formally, for $f : \zone^n \to \zone, S \subseteq [n], c \in \zone^{\bar{S}}$, we use $f_{S \vert c}$ to refer to the Boolean function from $\zone^S \to \zone$ obtained by taking $f$ and restricting the bits in $\bar{S}$ to equal to the string $c$. This allows us to introduce the subcube stifling number formally.

\begin{definition}[Subcube stifling number]\label{def:ssnum}
    Let $f : \zone^n \to \zone$ be a Boolean function. The subcube stifling number of $f$, denoted $\mu(f)$, is defined as follows.
\[
\mu(f) \;=\; \max \left\{ k \in \mathbb{N} \;:\;
\begin{array}{l}
\forall S \subseteq [n] \text{ with } |S|\le k, \\
\forall b \in \{0,1\}^S, \\
\exists\, c \in \{0,1\}^{[n]\setminus S}
\text{ such that } \\
f_{S \vert c}
= \mathbb{I}[x_S = b]
\end{array}
\right\}.
\]
\end{definition}

Recall the related notion of a Boolean function being stifled~\cite{Chattopadhyay_Mande_Sanyal_Sherif_2023}.
\begin{definition}\label{def: stifled}
Let $g : \zone^m \to \zone$ be a Boolean function and $k \in [m]$. We say that $g$ \emph{is $k$-stifled} if the following holds:
\begin{align*}
&\forall S \subseteq [m]~\textnormal{with}~|S| \leq k~\textnormal{and}~\forall b \in \zone,\\ &\exists~z \in \zone^{[m] \setminus S}~\textnormal{such that for all}~x \in \zone^m ~\textnormal{with}~ x_{[m] \setminus S} = z, g(x) = b.
\end{align*}
\end{definition}

\subsection{Approximate degree}

The $\varepsilon$-approximate degree of a Boolean function is defined as the minimum degree of a real polynomial $p$ such that $|p(x) - f(x)| \leq \varepsilon$ for all $x$ in the domain of $f$. The approximate degree of a Boolean function can be captured by a particular linear program.
Strong duality then implies the following (for a proof, see, for example,~\cite[Theorem~6.1]{Bun_Thaler_2022}).

\begin{lemma}[Dual witness]\label{lem:dual}
    Let $f:\{0,1\}^n\rightarrow\{0,1\}$ be a Boolean function. Then, $\adeg_\varepsilon(f) = \Omega(d)$ if there exists a dual witness $\psi:\{0,1\}^n\rightarrow\mathbb{R}$ that satisfies the following conditions:
    \begin{align*}
        & \sum_{x}|\psi(x)| = 1, \tag* {\textbf{L1 norm}}\\
        & \sum_{x}\psi(x)f(x) > \varepsilon, \tag* {\textbf{Correlation}}\\
        & \sum_{x}\psi(x)\chi_{T}(x) = 0, \qquad \forall T\subseteq[n], |T| \le d. \tag* {\textbf{Pure High Degree}}
    \end{align*}
\end{lemma}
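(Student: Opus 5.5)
The plan is to prove only the direction stated, namely that the existence of $\psi$ forces $\adeg_\varepsilon(f) > d$. This direction is the weak-duality half and needs no LP strong duality. I will argue by contradiction: suppose some polynomial $p$ with $\deg(p) \le d$ satisfies $|f(x)-p(x)|\le\varepsilon$ for all $x\in\zone^n$.

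\textbf{Step 1: $\psi$ is orthogonal to $p$.} Over the Boolean cube every real polynomial can be multilinearized without increasing its degree, since $x_i^2 = x_i$ on $\zone$. So $p$ is a linear combination of monomials $\prod_{i\in T}x_i$ with $|T|\le d$. Each such monomial is in turn a linear combination of the characters $\chi_{T'}(x) = (-1)^{\sum_{i\in T'}x_i}$ with $T'\subseteq T$. This follows from $x_i = (1-\chi_{\{i\}}(x))/2$ and expanding the product. Hence $p = \sum_{|T|\le d} \hat p(T)\chi_T$, and the pure high degree condition together with linearity gives
\[
\sum_x \psi(x)p(x) = \sum_{|T|\le d}\hat p(T)\sum_x\psi(x)\chi_T(x) = 0 .
\]

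\textbf{Step 2: bound the correlation.} Using Step 1, Hölder's inequality, and the L1 norm condition,
\[
\sum_x \psi(x)f(x) = \sum_x \psi(x)\bigl(f(x)-p(x)\bigr) \le \sum_x |\psi(x)|\,|f(x)-p(x)| \le \varepsilon\sum_x|\psi(x)| = \varepsilon .
\]
This contradicts the correlation condition $\sum_x\psi(x)f(x)>\varepsilon$. Therefore no approximating polynomial of degree at most $d$ exists, so $\adeg_\varepsilon(f)\ge d+1 = \Omega(d)$.

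\textbf{Main obstacle.} The only step requiring any care is Step 1: checking that the span of the characters $\chi_T$ with $|T|\le d$ contains all polynomials of degree at most $d$ when restricted to the cube. This is routine via multilinearization and the change of basis between monomials and characters described above. The converse direction, that such a $\psi$ always exists when $\adeg_\varepsilon(f)>d$, would require LP strong duality as in~\cite[Theorem~6.1]{Bun_Thaler_2022}, but it is not needed for the statement as given.
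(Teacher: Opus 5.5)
Your proof is correct. It differs from the paper only in how much machinery it invokes, since the paper does not prove this lemma. The paper says the statement follows from strong duality of the linear program for approximate degree and points to \cite[Theorem~6.1]{Bun_Thaler_2022}. You observe instead that the direction actually stated is just weak duality, and you prove it directly. You expand any degree-$\le d$ polynomial in the characters $\chi_T$ with $|T|\le d$, so $\psi$ annihilates it. The correlation is then at most $\varepsilon\sum_x|\psi(x)|=\varepsilon$, which contradicts the strict inequality in the correlation condition.

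Your route is self-contained and needs no LP theory. What strong duality adds is the converse: whenever $\adeg_\varepsilon(f)>d$, such a witness exists, so the dual method is complete. Nothing in the paper needs that converse. Your version also gives the sharp bound $\adeg_\varepsilon(f)\ge d+1$ rather than $\Omega(d)$. That sharper bound is exactly what justifies the precise inequality $\adeg_\varepsilon(f)\ge d(S^\perp)$ in \Cref{lemma:approx_deg_of_linear_codes}. There the witness is orthogonal to all characters of weight at most $d(S^\perp)-1$, and the stated $\Omega(d)$ form would only give $\Omega(d(S^\perp))$.
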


\begin{theorem}[\cite{Nisan_Szegedy_1994}]\label{thm:adeg and}
    Let $\mathsf{AND}:\{0,1\}^n\rightarrow \{0,1\}$ be a Boolean function such that $\mathsf{AND}(x) = 1$ if and only if $x = 1^n$. Then,
    \begin{equation*}
        \adeg(\mathsf{AND}) = \Theta(\sqrt{n}).
    \end{equation*}
\end{theorem}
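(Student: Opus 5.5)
The plan is to prove the two directions separately, reducing both to statements about univariate polynomials in the Hamming weight $t=\mathsf{wt}_H(x)\in\{0,1,\dots,n\}$.

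For the \textbf{upper bound} $\adeg(\mathsf{AND})=O(\sqrt{n})$, I would construct a univariate polynomial $q$ of degree $O(\sqrt{n})$ with $q(n)\in[2/3,4/3]$ and $|q(t)|\le 1/3$ for all integers $0\le t\le n-1$. Then $p(x)=q(\mathsf{wt}_H(x))$ is a multilinear-reducible polynomial of the same degree that $1/3$-approximates $\mathsf{AND}$.
\begin{itemize}
\item Take the Chebyshev polynomial $T_d$ with $d=\lceil c\sqrt{n}\rceil$, and rescale so that $[0,n-1]$ maps onto $[-1,1]$: set $u(t)=T_d\bigl(t/(n-1)\bigr)$, after an affine shift of $[0,n-1]$ to $[-1,1]$ if one prefers the symmetric interval.
\item On $[0,n-1]$ we have $|u|\le 1$.
\item At $t=n$ the argument is $1+\Theta(1/n)$. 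Since $T_d'(1)=d^2$ and $T_d$ is increasing and convex beyond $1$, we get $u(n)\ge 1+d^2\cdot\Theta(1/n)$, which is at least $4$ for a suitable constant $c$.
\item Normalize by $q=(u+1)/(u(n)+1)$ after a further affine adjustment so that the values on $[0,n-1]$ land in $[0,1/3]$ and $q(n)=1$. This is a routine calculation.
\end{itemize}

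For the \textbf{lower bound} $\adeg(\mathsf{AND})=\Omega(\sqrt{n})$, I would use Minsky--Papert symmetrization.
\begin{itemize}
\item Given a degree-$d$ polynomial $p$ with $|p(x)-\mathsf{AND}(x)|\le 1/3$, average over permutations of the variables. This yields a symmetric polynomial of degree at most $d$ with the same approximation guarantee.
\item Any symmetric multilinear polynomial equals $q(\mathsf{wt}_H(x))$ for a univariate $q$ of degree at most $d$, because $\mathbb{E}_\sigma[\prod_{i\in T}x_{\sigma(i)}]=\binom{t}{|T|}/\binom{n}{|T|}$ is a polynomial of degree $|T|$ in $t$.
\item Hence $|q(t)|\le 1/3$ for integers $0\le t\le n-1$, while $q(n)\ge 2/3$. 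By the mean value theorem there is a point $\xi\in[n-1,n]$ with $|q'(\xi)|\ge 1/3$.
\end{itemize}

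The main obstacle is that $q$ is controlled only at integer points, whereas the Markov brothers' inequality ($\max_{[-1,1]}|r'|\le \deg(r)^2\max_{[-1,1]}|r|$) needs a bound on the whole interval. I would handle this with the standard Ehlich--Zeller/Rivlin--Cheney argument.
\begin{itemize}
\item Let $M=\max_{t\in[0,n]}|q(t)|$, attained at some real $t^*$.
\item The nearest integer to $t^*$ lies within distance $1/2$, and there $|q|\le 4/3$. So $M\le 4/3+\tfrac12\max_{[0,n]}|q'|$.
\item Markov's inequality on $[0,n]$, rescaled, gives $\max_{[0,n]}|q'|\le 2d^2M/n$.
\item If $d^2\le n/4$, then $M\le 4/3+M/2$, so $M\le 8/3$. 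Consequently $1/3\le\max|q'|\le 2d^2(8/3)/n$, which forces $d^2\ge n/16$.
\item In the remaining case $d^2>n/4$, and the bound holds directly.
\end{itemize}
Either way $d=\Omega(\sqrt{n})$, which completes the proof.
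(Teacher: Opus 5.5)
Your proof is correct and is the standard Nisan--Szegedy argument: Chebyshev polynomials give the upper bound, and Minsky--Papert symmetrization plus the Ehlich--Zeller/Rivlin--Cheney use of Markov's inequality give the lower bound; the paper gives no proof of its own and simply cites that result, so there is nothing further to compare. The only slack is cosmetic: the normalization $(u+1)/(u(n)+1)$ needs $u(n)\ge 5$ to land in $[0,1/3]$ (alternatively use $u/u(n)$ with $u(n)\ge 4$), and in the lower bound the step actually yields $M\le 4/3+M/4$, so your $M\le 8/3$ is valid but loose.
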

We now observe some simple lower bounds on measures in terms of $\mu(f)$.
\begin{claim}\label{claim:bounds_due_to_OR}
    Let $f:\{0,1\}^n\rightarrow\{0,1\}$ be a Boolean function. Then,
    \begin{enumerate}
        \item $\adeg(f) = \Omega(\sqrt{\mu(f)})$,
        \item $\deg(f) = \Omega(\mu(f))$,
        \item $\mathsf{s}(f) = \Omega(\mu(f))$.
    \end{enumerate}
\end{claim}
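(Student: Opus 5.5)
The plan is to use the definition of $\mu(f)$ once: it gives a restriction of $f$ that is exactly $\mathsf{AND}$ on $k=\mu(f)$ variables, and all three bounds follow because restrictions never increase these measures.

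Let $k=\mu(f)$; if $k=0$ there is nothing to prove. Take any $S\subseteq[n]$ with $|S|=k$ and $b=1^S$. By \Cref{def:ssnum} there is a $c\in\zone^{\bar S}$ such that $f_{S\vert c}(y)=\mathbb{I}[y=1^S]=\mathsf{AND}(y)$ for every $y\in\zone^S$. So fixing the variables in $\bar S$ to $c$ turns $f$ into $\mathsf{AND}$ on $k$ variables.

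Next we check that each measure is monotone under restriction. For approximate degree, let $p$ be a $1/3$-approximating polynomial for $f$ of degree $\adeg(f)$. Substituting the constants $c$ for the variables in $\bar S$ gives a polynomial in the variables of $S$. Its degree is at most $\deg(p)$, and it approximates $f_{S\vert c}=\mathsf{AND}$ pointwise to within $1/3$. Hence $\adeg(f)\ge\adeg(\mathsf{AND}_k)=\Omega(\sqrt k)$ by \Cref{thm:adeg and}, which is item 1. The same substitution applied to the exact multilinear representation of $f$ gives $\deg(f)\ge\deg(\mathsf{AND}_k)=k$, because the unique multilinear representation of $\mathsf{AND}_k$ is the monomial $\prod_{i\in S}x_i$. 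That is item 2.

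For sensitivity, consider the input $x\in\zone^n$ with $x_S=1^S$ and $x_{\bar S}=c$. Then $f(x)=1$, and flipping any single coordinate $i\in S$ produces an input whose $S$-part is not $1^S$ while the $\bar S$-part is still $c$. So $f$ changes value under each of these $k$ flips, which gives $\mathsf{s}(f)\ge\mathsf{s}(x)\ge k$ and proves item 3.

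No step is a real obstacle. The only point needing care is the substitution argument for approximate degree and degree: setting variables to constants preserves pointwise approximation and cannot raise degree.
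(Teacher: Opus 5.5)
Your proposal is correct and follows the paper's proof essentially step for step. Both take $S$ with $|S|=\mu(f)$ and $b=1^S$, restrict $f$ to $\mathsf{AND}$ on $\mu(f)$ variables, and use that restriction cannot increase approximate degree or degree, and that $f$ already has sensitivity at least $\mu(f)$ at the input $(1^S,c)$. Your explicit handling of $\mu(f)=0$ and your appeal to uniqueness of the multilinear representation of $\mathsf{AND}$ spell out details the paper leaves implicit.
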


\begin{proof}

    Let $S\subseteq [n]$, $|S| = \mu(f)$ and $b = 1^S$. The definition of subcube stifling number implies that there exists $c\in\{0,1\}^{\bar{S}}$ such that
    \begin{equation}
        f_{S\vert c} = \mathbb{I}[y = 1^k] = \mathsf{AND}_{\mu(f)}.
    \end{equation}

    \begin{enumerate}
        \item Let $p$ be an $\varepsilon$-approximating polynomial of $f$ with $\widetilde{\deg}_\varepsilon(f) = \deg(p)$. Then, the polynomial $p_{S\vert c}$ (obtained by fixing indices in $\bar{S}$ to $c$) $\varepsilon$-approximates $\mathsf{AND}_{\mu(f)}$. By Theorem~\ref{thm:adeg and},
        \begin{equation*}
            \adeg(f) = \deg(p) \geq \deg(p_{S\vert c}) = \Omega(\sqrt{|S|}) = \Omega(\sqrt{\mu(f)}).
        \end{equation*}

        \item Let $p$ be the representing polynomial of $f$. Then, $p_{S\vert c}$ is the representing polynomial of $\mathsf{AND}_{\mu(f)}$. Using arguments analogous to above (and the well-known fact that $\deg(\mathsf{AND}_k) = k$), we get $\deg(f) = \Omega(\mu(f))$.

        \item We know that $f_{S\vert c} = \mathsf{AND}_{\mu(f)} \implies \mathsf{s}(f_{S\vert c}) = \Omega(\mu(f))$. Therefore, the function $f$ has sensitivity $\Omega(\mu(f))$ on input $(1^S,c)$. Since sensitivity of a Boolean function is defined as maximum over all inputs, we get $\mathsf{s}(f) = \Omega(\mu(f))$.\qedhere
    \end{enumerate}
\end{proof}

\begin{theorem}[{\cite[Theorem~4.8]{Beals_Buhrman_Cleve_Mosca_de_Wolf_2001}}]\label{thm:polynomial_method}
    Let $f:\{0,1\}^n\rightarrow\{0,1\}$ be a Boolean function and let $\mathsf{Q}(f)$ denote the bounded-error quantum query complexity of function $f$. Then,
    \begin{equation*}
        \mathsf{Q}(f) = \Omega(\adeg(f)).
    \end{equation*}
\end{theorem}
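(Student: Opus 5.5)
The plan is the standard polynomial method. Fix an optimal bounded-error quantum query algorithm $\mathcal{A}$ for $f$ making $T=\mathsf{Q}(f)$ queries. I will show that its acceptance probability, viewed as a function of the input $x\in\zone^n$, is a real multilinear polynomial of degree at most $2T$. Since $\mathcal{A}$ errs with probability at most $1/3$ on every input, this polynomial $1/3$-approximates $f$ in the sense of Definition~\ref{def:adeg}. That gives $\adeg(f)\le 2T$, i.e.\ $\mathsf{Q}(f)=\Omega(\adeg(f))$.

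\textbf{Normal form.} Write the algorithm as $U_T O_x U_{T-1} O_x \cdots U_1 O_x U_0$ applied to a fixed initial state $|0\rangle$. Here the $U_t$ are unitaries independent of $x$. The oracle acts on basis states $|i,b,w\rangle$ (query index $i$, control bit $b$, workspace $w$) by
\[
O_x|i,b,w\rangle=(-1)^{b x_i}|i,b,w\rangle .
\]
Any other standard oracle model can be converted to this one at no extra query cost, so I will assume this form.

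\textbf{Induction on the number of queries.} The inductive claim is: after $t$ queries, every basis-state amplitude is a complex-coefficient polynomial in $x_1,\dots,x_n$ of degree at most $t$.
\begin{itemize}
\item \emph{Base case.} For $t=0$ the amplitudes are constants.
\item \emph{$x$-independent unitaries.} Applying $U_t$ replaces each amplitude by a fixed complex linear combination of amplitudes, so degrees do not increase.
\item \emph{Queries.} Applying $O_x$ multiplies the amplitude of $|i,b,w\rangle$ by $(-1)^{b x_i}=1-2bx_i$. This is a polynomial of degree at most $1$ in $x$, so the degree increases by at most one.
\end{itemize}

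\textbf{Acceptance probability.} At the end, the acceptance probability is $p(x)=\sum_{\text{accepting } s}|\alpha_s(x)|^2$. Write $\alpha_s=R_s+iI_s$, where $R_s$ and $I_s$ are real polynomials of degree at most $T$; this is the one small technical point, since the amplitudes are complex. Then $|\alpha_s|^2=R_s^2+I_s^2$ is a real polynomial of degree at most $2T$. Finally, multilinearize using $x_i^2=x_i$ on $\zone^n$, which does not increase degree. Bounded error gives $|p(x)-f(x)|\le 1/3$ for all $x$, so $\adeg(f)\le\deg p\le 2T$, as claimed.

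I expect no genuine obstacle. The only steps needing care are fixing the oracle normal form and handling complex amplitudes through their real and imaginary parts.
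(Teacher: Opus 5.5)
Your proof is correct and is the standard polynomial-method argument of Beals et al.\ that the paper cites for this theorem; the paper gives no proof of its own. Amplitudes after $t$ queries have degree at most $t$, so the acceptance probability is a polynomial of degree at most $2T$ that $1/3$-approximates $f$, giving $\adeg(f)\le 2\mathsf{Q}(f)$. The only cosmetic difference is that you use the controlled phase oracle instead of the bit-flip oracle $|i,b,w\rangle\mapsto|i,b\oplus x_i,w\rangle$ used there; the two are equivalent up to Hadamards on the control qubit.
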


\begin{theorem}[{\cite[Theorem~1.1]{Sherstov_2013}}]\label{thm:robust_polynomials}
    Let $p : \{0,1\}^n \to [-1/3,4/3]$ be a polynomial with degree $d$. Then, there exists a polynomial $p_{\mathrm{robust}}$ of degree $O(d)$ that satisfies the following.
    \begin{equation*}
        |p(x) - p_{\mathrm{robust}}(x + \Delta)| \le \frac{1}{3}, \qquad \text{for all } x \in \{0,1\}^n \text{ and } \Delta \in \left[-\frac{1}{3},\frac{1}{3}\right]^n
    \end{equation*}
\end{theorem}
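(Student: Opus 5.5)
We would follow Sherstov's strategy: write the perturbed input in terms of the true bits plus noise, expand $p$ around the Boolean point, and control the higher-order noise terms with an extremal inequality for bounded polynomials on the cube. First, by a linear change of variables we work over $\{-1,1\}^n$ with $p$ bounded by a constant, so that the perturbation becomes $y_i = x_i + \delta_i$ with $|\delta_i| \le 2/3$. The naive idea is to replace each input $y_i$ by $r(y_i)$, where $r$ is a univariate polynomial that maps a neighbourhood of each of $\pm 1$ into a tiny neighbourhood of that same point. This costs degree $d \cdot \deg(r)$, which is too much, so the construction has to be more economical.

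The key structural step is a multilinear Taylor-type expansion
\[
p(x+\delta) \;=\; \sum_{S\subseteq[n]} \Bigl(\prod_{i\in S}\delta_i\Bigr)\,(\partial_S p)(x),
\]
which is valid because $p$ is multilinear. Only sets with $|S|\le d$ contribute. We would then prove a bound of the form
\[
\sum_{|S|=j}\bigl|(\partial_S p)(x)\bigr| \le (Cd)^j \|p\|_\infty \qquad \text{for every } x \in\{-1,1\}^n,
\]
a Bernstein/Markov-type inequality on the hypercube. We would derive it by symmetrizing around $x$ and applying the univariate Markov brothers' inequality to $t\mapsto p$ restricted to lines through $x$, iterated $j$ times. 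This shows that noise of size $\eta$ per coordinate changes $p$ by roughly $\sum_j (Cd\eta)^j$. That quantity is small once $\eta \ll 1/d$, but not for constant noise.

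To reduce the noise from a constant to about $1/d$ while paying only $O(d)$ degree, we would use a weighted scheme. Write $p=\sum_{j\le d} p_j$ by grouping the terms of the expansion by order. For order-$j$ terms, only accuracy $2^{-j}$ is needed on each variable, which suggests substituting $r_{j}(y_i)$ with $\deg r_j = O(d/j)$ or $O(1+\log(d)/j)$ in the degree-$j$ part. More precisely, we would write $p$ as a telescoping sum $\sum_k \bigl(p\circ r_{k} - p\circ r_{k-1}\bigr)$ where the rounding accuracies improve geometrically. The difference terms only involve high-order derivatives, whose total degree is bounded by the extremal inequality above. This keeps the overall degree $O(d)$ and the error below $1/3$, using the shift in range from $[-1/3,4/3]$ and a final rescaling.

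We expect the main obstacle to be exactly this degree bookkeeping. The naive composition loses a factor $\deg r$, and we must ensure that the sum of $\deg(r_k)$ times the number of variables it multiplies in the relevant derivative terms stays $O(d)$. We would first try the geometric choice $\deg r_k = O(k)$ applied only to monomial orders $\ge d/2^k$. The derivative bound is the tool that controls the error from the truncated orders. If this accounting fails, we would fall back on Sherstov's original argument, which can simply be cited here since the statement is quoted from \cite{Sherstov_2013}.
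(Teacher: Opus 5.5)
The paper gives no proof of this statement. It is quoted from \cite{Sherstov_2013} and used only as a black box in the proof of \Cref{thm: subcube stifling adeg lifting}. Your closing fallback, citing Sherstov, is exactly what the paper does and is all that is needed.

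The self-contained sketch you give before that, however, is not a proof, and it contains a false step. The extremal inequality $\sum_{|S|=j}|(\partial_S p)(x)|\le (Cd)^j\|p\|_\infty$ fails already for $j=1$. Let $p$ be the multilinear polynomial agreeing with $T_d\bigl(\frac1n\sum_i x_i\bigr)$ on $\{-1,1\}^n$, where $T_d$ is the Chebyshev polynomial and $n\gg d^2$. At $x=1^n$ each derivative is $\partial_i p(1^n)=\frac12\bigl(T_d(1)-T_d(1-\frac2n)\bigr)\approx d^2/n$. Hence $\sum_i|\partial_i p(1^n)|\approx T_d'(1)=d^2$. This $d^2$ is exactly what the Markov brothers' inequality you propose to iterate delivers. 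So the Taylor expansion is harmless only for noise of order $1/d^2$, not $1/d$.

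More seriously, the step meant to avoid the logarithmic overhead is not a construction.
\begin{itemize}
\item The telescoping sum $\sum_k(p\circ r_k-p\circ r_{k-1})$ is identically $p\circ r_K-p\circ r_0$, which has degree $d\cdot\deg r_K$. Nothing is saved unless each difference is replaced by a genuinely lower-degree polynomial, and you give no way to do that.
\item Saying the difference terms have ``total degree bounded by the extremal inequality'' conflates a bound on the \emph{size} of derivatives with a bound on \emph{degree}.
\item ``The order-$j$ part'' of $p(x+\delta)$ is defined via the split $y=x+\delta$. That split depends on $\mathrm{sgn}(y_i)$ and is not accessible to a polynomial in $y$. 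You cannot feed different $r_j$ into different orders without first writing each order explicitly as a polynomial in $y$.
\end{itemize}

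The only mechanism actually present is coordinatewise noise reduction $y_i\mapsto r(y_i)$. Reaching noise $1/\mathrm{poly}(d)$ this way costs a $\Theta(\log d)$ factor in degree. Removing exactly this factor is the content of Sherstov's theorem, whose proof uses an explicit construction together with a technique that forces partial cancellation of error terms. The factor matters here. With an $O(d\log d)$ robustification, the proof of \Cref{thm: subcube stifling adeg lifting} would only give $\adeg(f\circ g)=\Omega\bigl(\sqrt{\mu(f)}\,\adeg(g)/\log n\bigr)$. That loses exactly the logarithmic factor the paper claims to avoid.

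So either cite the theorem, as the paper does, or supply a real argument for the degree-saving step.
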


\begin{theorem}[{\cite[Theorem~6.6]{Sherstov_2012}}]\label{thm:XOR_composition}
    Let $g:\{0,1\}^n\rightarrow\{0,1\}$ be any Boolean function. Then,
    \begin{equation*}
        \adeg(\mathsf{XOR}\circ g)\in\Omega(n\cdot\widetilde{\deg}(g)).
    \end{equation*}
\end{theorem}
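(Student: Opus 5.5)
The plan is to build a dual witness for $\mathsf{XOR}\circ g$ as a tensor product of dual witnesses for $g$, and certify it with \Cref{lem:dual}. Throughout I would work in the $\pm1$ convention: write $G(x)=(-1)^{g(x)}$, so that $(-1)^{(\mathsf{XOR}\circ g)(x_1,\dots,x_n)}=\prod_{i=1}^n G(x_i)$. An $\varepsilon$-approximation in the $\{0,1\}$ sense is the same as a $2\varepsilon$-approximation in the $\pm1$ sense. So \Cref{lem:dual} becomes: a $\psi$ with $\|\psi\|_1=1$, pure high degree $D$, and $\sum_x\psi(x)F(x)>2\varepsilon$ certifies $\adeg_\varepsilon(f)>D$.

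\textbf{Step 1: an inner witness with near-perfect correlation.} Directly tensoring a witness for $\adeg_{1/3}(g)$ loses too much: its correlation is only some constant $\delta<1$, and $\delta^n$ is exponentially small. Instead, fix a large constant $c$ and consider error $\eta=1/(cn)$ in the $\pm1$ sense. Approximating with smaller error is harder, so $\adeg_{\eta}(g)\ge\adeg_{2/3}(g)=\adeg(g)=:d$, where both errors are in the $\pm1$ convention (the latter being $1/3$ in the $\{0,1\}$ sense). By the strong duality underlying \Cref{lem:dual}, and not just its sufficiency direction, there is a $\psi:\zone^m\to\R$ with the following properties:
\begin{itemize}
    \item $\|\psi\|_1=1$;
    \item $\langle\psi,\chi_T\rangle=0$ for all $|T|<d$;
    \item $\langle\psi,G\rangle>1-\eta$.
\end{itemize}
The point of this step is that the correlation is close to $1$, at no cost in degree.

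\textbf{Step 2: the product witness.} Define $\Psi(x_1,\dots,x_n)=\prod_{i=1}^n\psi(x_i)$. Each property of $\Psi$ follows by factorization.
\begin{itemize}
    \item The $L_1$ norm is $\prod_i\|\psi\|_1=1$.
    \item The correlation with $\prod_i G(x_i)$ is $\prod_i\langle\psi,G\rangle>(1-1/(cn))^n\ge e^{-2/c}$. For $c$ large this exceeds $2/3$, the $\pm1$ version of error $1/3$.
    \item For pure high degree, a character $\chi_T$ on the $nm$ variables splits as $\prod_i\chi_{T_i}(x_i)$, and $\langle\Psi,\chi_T\rangle=\prod_i\langle\psi,\chi_{T_i}\rangle$. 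If $|T|<nd$, then some block has $|T_i|<d$, so that factor, and hence the whole product, vanishes.
\end{itemize}
\Cref{lem:dual} then gives $\adeg(\mathsf{XOR}\circ g)\ge nd=\Omega(n\,\adeg(g))$.

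\textbf{Main obstacle.} The delicate point is Step 1: choosing the inner error as $\Theta(1/n)$ so that the $n$-fold product of correlations stays bounded away from the threshold. This requires the converse (existence) direction of the LP duality behind \Cref{lem:dual}, applied to $g$ at error $\Theta(1/n)$, together with the trivial monotonicity $\adeg_{\eta}(g)\ge\adeg(g)$ for $\eta$ below $1/3$. Care is needed to keep the $\{0,1\}$ versus $\pm1$ error conventions consistent, so that the final correlation clears the threshold required for error $1/3$.
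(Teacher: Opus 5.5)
The paper does not prove this statement; it cites it from Sherstov. Your Step 2 is fine, but Step 1 is false, and the product-witness approach fails with it.

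By LP duality, the largest correlation $\langle\psi,G\rangle$ attainable by a unit-$L_1$ witness orthogonal to all polynomials of degree $<d$ equals the smallest $\pm1$-error attainable by such polynomials. Hence $\adeg_\eta(g)\ge d$ gives you a witness with correlation $>\eta$, not $>1-\eta$. A witness with correlation $>1-\eta$ exists only if $g$ cannot be approximated in degree $<d$ even to $\pm1$-error $1-\eta$, i.e. $\{0,1\}$-error $\frac12-\frac{\eta}{2}$. That is hardness in the nearly trivial regime. Monotonicity goes the other way, and the claim fails in general. Take $g=\mathsf{OR}_m$ with $m\le n$ (e.g. $m=n$, as in the statement as written) and $c\ge 1$. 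The polynomial $p(x)=\frac12-\frac{1}{2cn}+\frac{1}{cn}\sum_j x_j$ satisfies $|p-g|\le\frac12-\frac{1}{2cn}$ everywhere. So every $\psi$ orthogonal to all polynomials of degree at most $1$ has $\langle\psi,G\rangle=\langle\psi,G-(1-2p)\rangle\le 1-\eta$. Yet $d=\adeg(\mathsf{OR}_m)=\Theta(\sqrt m)$ exceeds $1$ for large $m$.

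No other choice of per-block witnesses rescues tensoring either. A product correlation $\prod_i\delta_i>2/3$ forces $\sum_i(1-\delta_i)<\ln\frac32$. For $\mathsf{OR}_m$, a block witness of correlation $\delta_i$ has pure high degree only $O(1+\sqrt{m(1-\delta_i)})$, by a shifted Chebyshev approximant. Cauchy--Schwarz then caps the certified degree at $O(n+\sqrt{mn})$, far below $n\sqrt m$.

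The missing idea is to avoid needing near-perfect correlation in each block. Keep a witness $\psi$ for $g$ at the original error: pure high degree $d=\adeg(g)$ and $\delta:=\langle\psi,G\rangle>2/3$. Put $\mu=|\psi|$ and $\rho=(G-\operatorname{sgn}\psi)\mu$. Then $G\mu=\psi+\rho$ and $\|\rho\|_1=1-\delta<1/3$. Expand
$\prod_i G(x_i)\mu(x_i)=\sum_{A\subseteq[n]}\prod_{i\in A}\psi(x_i)\prod_{i\notin A}\rho(x_i)$.
Every term with $|A|\ge\alpha n$ is orthogonal to any $P$ of degree $<\alpha nd$. Indeed, each monomial of $P$ has degree $\ge d$ on fewer than $\alpha n$ blocks, so it has degree $<d$ on some block in $A$, where the factor $\psi$ annihilates it. The remaining terms have total $L_1$ mass at most $\sum_{j<\alpha n}\binom{n}{j}3^{-(n-j)}=2^{-\Omega(n)}$ for a small constant $\alpha$.

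Now suppose $\|P-\prod_i G(x_i)\|_\infty\le 2/3$. Then $\|P\|_\infty\le 5/3$ and $P\prod_i G(x_i)\ge 1/3$ pointwise, so $\mathbb{E}_{\mu^{\otimes n}}[P\prod_i G(x_i)]\ge 1/3$. This contradicts the $2^{-\Omega(n)}$ bound once $n$ exceeds a constant; smaller $n$ are handled by restricting to a single copy. This gives $\adeg(\mathsf{XOR}\circ g)\ge\alpha n\,\adeg(g)$. The same computation even shows hardness at $\pm1$-error $1-2^{-\Omega(n)}$.
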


\begin{definition}[Search with bounded wildcards]\label{def:search_with_bounded_wildcards}
    Let $1 \leq k \leq n$ be positive integers. Let $X := \{(S,b): S\subseteq[n],\ |S|\le k,\ b\in\{0,1\}^S\}$. For every $x\in\{0,1\}^n$, define $z\in\{0,1\}^X$\footnote{We suppress the dependence of $z$ on $x$ for ease of readability.} by $z_{S,b} := \mathbb{I}[x_S=b]$. Let $D := \{z : x\in\{0,1\}^n\}\subseteq\{0,1\}^X$. Then, the ``Search with $k$-bounded wildcards'' (partial) Boolean function (denoted by $\SWk$) is defined as follows.
    \begin{align*}
        \SWk:D\to\{0,1\} \quad \text{ and } \quad \SWk(z) = \mathsf{XOR}(x), \quad \forall x\in\{0,1\}^n.
    \end{align*}
\end{definition}

\begin{theorem}[{\cite[Restatement of Theorem~4.2]{Cornelissen_Mande_Patro_Raja_Sanyal_2025}}]\label{thm:cmprs}
The quantum query complexity of the ``Search with $k$-bounded wildcards'' Boolean function is $\Theta(n/\sqrt{k})$ i.e.,
\[
\mathsf{Q}(\SWk)=\Theta\left(\frac{n}{\sqrt{k}}\right).
\]
\end{theorem}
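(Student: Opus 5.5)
The plan is to prove the two bounds separately. For the upper bound I will split the $n$ input bits into blocks. For the lower bound I will use the general adversary method, which I expect to be the main obstacle.

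\textbf{Upper bound $O(n/\sqrt{k})$.}
\begin{enumerate}
\item Partition $[n]$ into $m=\lceil n/k\rceil$ blocks $B_1,\dots,B_m$, each of size at most $k$. Then $\XOR(x)=\bigoplus_{j=1}^m \XOR(x_{B_j})$.
\item Inside a single block $B$ every set $S\subseteq B$ has $|S|\le k$, so all wildcard queries $z_{S,b}$ with $S\subseteq B$ are available. This is exactly the unrestricted ``search with wildcards'' model on $|B|$ bits.
\item In that model the whole string $x_B$ can be learned with $O(\sqrt{|B|})=O(\sqrt{k})$ queries. This is the wildcard-learning result of Ambainis--Childs--Le~Gall--Tani, with the logarithmic factor removed by Belovs' adversary/span-program construction. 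It gives $\XOR(x_B)$, and hence a bounded-error algorithm for the inner block function $h_B(x_B)=\XOR(x_B)$ using $O(\sqrt{k})$ queries.
\item Run blocks sequentially with majority-vote error reduction? That costs an extra $\log(n/k)$ factor, so I will not do it. Instead I compose at the level of the general adversary bound, which characterises bounded-error quantum query complexity up to constants also for partial functions.
\item The outer function $\XOR_m$ has $\mathrm{Adv}^{\pm}(\XOR_m)=O(m)$. For Boolean inner functions, $\mathrm{Adv}^{\pm}(f\circ h)\le \mathrm{Adv}^{\pm}(f)\cdot\max_j\mathrm{Adv}^{\pm}(h_j)$, i.e.\ $\mathsf{Q}(f\circ h)=O(\mathsf{Q}(f)\mathsf{Q}(h))$ as in Lee--Mittal--Reichardt--\v{S}palek--Szegedy.
\item Together these give $O(m\sqrt{k})=O(n/\sqrt{k})$.
\end{enumerate}
Some care is needed because $\SWk$ is a partial function on the image $D$. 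However, the blocks' query sets are disjoint sub-families of $X$, so the composed structure is literally $\XOR_m\circ(h_{B_1},\dots,h_{B_m})$ on their respective promise domains.

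\textbf{Lower bound $\Omega(n/\sqrt{k})$.} I would exhibit a negative-weight adversary matrix $\Gamma$ indexed by $x\in\zone^n$, nonzero only between inputs of different parity.
\begin{enumerate}
\item The natural first attempt is to take $\Gamma$ to be a (negative-weight-corrected) adjacency structure between $x$ and $x\oplus 1_T$ for sets $T$ of size about $k$. Here $1_T$ denotes the indicator vector of $T$, so $x$ and $x\oplus 1_T$ differ exactly on $T$; I take $|T|$ odd so that they have different parities.
\item A single query $z_{S,b}$ distinguishes $x$ from $x\oplus 1_T$ only if $S\cap T\neq\emptyset$ and one of the two strings agrees with $b$ on $S$. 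Hence every query matrix $\Delta_{S,b}$ is supported on few pairs, weighted by the probability $2^{-|S|}$ of matching $b$.
\item The target is $\|\Gamma\circ\Delta_{S,b}\|\le O(\sqrt{k}/n)\,\|\Gamma\|$ uniformly over $(S,b)$. The spectral analysis should be done in the Fourier basis, since $\Gamma$ is a Cayley-type matrix on $\mathbb{F}_2^n$.
\end{enumerate}

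A useful sanity check and fallback is a reduction. Any query $z_{S,b}$ factorises as $\prod_j \mathbb{I}[x_{S\cap B_j}=b_{S\cap B_j}]$. Restricting to inputs where all blocks but one are fixed, one recovers the known $\Omega(\sqrt{k})$ lower bound for wildcard parity on $k$ bits. One then wants a direct-sum/composition lower bound $\mathrm{Adv}^{\pm}(f\circ h)\ge\mathrm{Adv}^{\pm}(f)\,\mathrm{Adv}^{\pm}(h)$ for $f=\XOR_{n/k}$.

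The hard step is that queries straddling several blocks are not single inner queries. I would handle this by designing $\Gamma$ globally, as in the adversary construction above, so that such cross-block queries are controlled directly. The $\sqrt{k}$ in the bound is supposed to come from the Grover-like trade-off between $|S|$ and the $2^{-|S|}$ matching probability.
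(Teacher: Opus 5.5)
The paper gives no proof of this statement. It is quoted from \cite{Cornelissen_Mande_Patro_Raja_Sanyal_2025} (Theorem~4.2 there), and the only added content is the remark that parity with size-$\le k$ subcube queries is the same as the index-query complexity of the partial function $\SWk$. Judged on its own, your upper bound is sound. Inside a block $B$ with $|B|\le k$, every query $\OR(x_S)=1-z_{S,0^S}$ with $S\subseteq B$ is available. So the $O(\sqrt{|B|})$ OR-query parity algorithm of Belovs, which the paper quotes, handles each block. You only need the parity, not the whole string, and the wildcard-learning result is due to Ambainis--Montanaro. The dual-adversary composition bound holds for partial Boolean inner functions, so it combines the blocks with $\XOR_{\lceil n/k\rceil}$ at cost $O(\lceil n/k\rceil\sqrt{k})=O(n/\sqrt{k})$. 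The block-internal coordinates of $D$ do range over the product of the block promises, so this composition is legitimate.

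The lower bound, however, is not proved. You never specify $\Gamma$ (its weights or negative corrections), and the inequality $\|\Gamma\circ\Delta_{S,b}\|\le O(\sqrt{k}/n)\|\Gamma\|$ is only a ``target''. The proposed Fourier analysis does not apply as stated: $\Delta_{S,b}[x,x\oplus 1_T]$ depends on whether $x_S=b$, not only on $T$, so $\Gamma\circ\Delta_{S,b}$ is no longer a Cayley matrix. The direct-sum fallback fails on straddling queries, as you note.

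The missing idea is that this is the easy direction, obtained by simulating in reverse. Every coordinate $z_{S,b}(x)=\mathbb{I}[x_S=b]$ is an AND of $|S|\le k$ literals of $x$, so it has adversary value and approximate degree $O(\sqrt{k})$. The composition inequality you already use, $\mathrm{ADV}^{\pm}(F\circ(g_1,\dots,g_N))\le\mathrm{ADV}^{\pm}(F)\max_j\mathrm{ADV}^{\pm}(g_j)$, never needs the $g_j$ to act on disjoint variables. Indeed, take the composed dual solution $w_{x,i}=\bigoplus_j \tilde u_{z(x),j}\otimes v^{(j)}_{x,i}$, where $\tilde u_{z,j}=u_{z,j}\otimes e_{z_j}$ on the $0$-side and $u_{z,j}\otimes e_{1-z_j}$ on the $1$-side. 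It satisfies all constraints regardless of overlaps, since only indices $j$ with $g_j(x)\neq g_j(y)$ contribute. With $F=\SWk$ the composed function is $\XOR_n(x)$, so $n\le\mathrm{ADV}^{\pm}(\XOR_n)\le O(\sqrt{k})\cdot\mathrm{ADV}^{\pm}(\SWk)$, which gives $\mathsf{Q}(\SWk)=\Omega(n/\sqrt{k})$.

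Equivalently, you can argue in the polynomial language the paper uses for its composition theorem:
\begin{enumerate}
\item Take the degree-$O(T)$ acceptance polynomial of a $T$-query algorithm for $\SWk$, then error-reduce and robustify it (\Cref{thm:robust_polynomials}).
\item Substitute degree-$O(\sqrt{k})$ approximants of $\mathbb{I}[x_S=b]$.
\item Compare the resulting degree with $\adeg(\XOR_n)=n$.
\end{enumerate}
Straddling queries are harmless here, since only $|S|\le k$ matters. The nontrivial ingredient of the theorem is on the upper-bound side (Belovs' algorithm), not the lower bound.
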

Note that the original theorem statement in \cite{Cornelissen_Mande_Patro_Raja_Sanyal_2025} considers the problem of computing parity of Boolean strings of length $n$ using bounded subcube queries. In \Cref{thm:cmprs}, we have simply restated this as a problem of computing a partial function (on a much larger input string, where each bit corresponds to the answer of a query) using standard index queries.

\begin{definition}[Spectral sensitivity]\label{def:spectral_sensitivity}
    Let $f:\{0,1\}^n\rightarrow\{0,1\}$ be a Boolean function. Let $G = (V,E)$ be the \emph{sensitivity graph} corresponding to Boolean function $f$ where $V = \{0,1\}^n$ and $E = \{\{x,y\} \mid \mathsf{wt}_H(x\oplus y) = 1, f(x) \neq f(y)\}$. We denote the adjacency matrix of graph $G$ using $A_f$. Then,
    \begin{equation*}
        \lambda(f) := \norm{A_f}.
    \end{equation*}
\end{definition}

\begin{theorem}[{\cite[Theorem~17]{Aaronson_Ben-David_Kothari_Rao_Tal_2021}}]\label{thm:spectral_sensitivity_lower_bound_adeg}
    Let $f:\{0,1\}^n\rightarrow\{0,1\}$ be a Boolean function. Then,
    \begin{equation*}
        \adeg(f) = \Omega(\lambda(f)).
    \end{equation*}
\end{theorem}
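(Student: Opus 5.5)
This result is quoted from \cite{Aaronson_Ben-David_Kothari_Rao_Tal_2021}, and the paper uses it only as a black box. If I had to reprove it, I would go through spectral graph theory on the hypercube. I have not checked that the plan below closes; its weak point is the final step.

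\textbf{Step 1: express $A_f$ through the hypercube.} Work in $\pm1$ notation, writing $F = \mathrm{diag}((-1)^{f(x)})$. Let $X_i$ be the permutation matrix that flips coordinate $i$, and let $L = \sum_i X_i$ be the hypercube adjacency matrix. The $(x, x\oplus e_i)$ entry of $X_i - F X_i F$ equals $1 - (-1)^{f(x)+f(x\oplus e_i)}$, which is $2$ exactly on sensitive edges. Therefore
\[
A_f = \tfrac12\,(L - F L F).
\]
Since $A_f$ is a nonnegative symmetric matrix, Perron--Frobenius gives a nonnegative unit eigenvector $v$ with
\[
\lambda(f) = \langle v, A_f v\rangle = \tfrac12\bigl(\langle v, L v\rangle - \langle Fv, L Fv\rangle\bigr).
\]

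\textbf{Step 2: replace $F$ by a polynomial.} Let $p$ be an approximating polynomial of degree $d = \adeg(f)$, rescaled to approximate the $\pm1$ version of $f$. The idea is to substitute $P = \mathrm{diag}(p)$ for $F$ and use the structure of $L$ on low-degree functions. Multiplication by a degree-$d$ polynomial can move Fourier weight by at most $d$ levels. Since $L$ acts on level-$j$ characters with eigenvalue $n - 2j$, the commutator $[L, P]$ satisfies $\|[L,P]\| \le O(d)\cdot\|p\|_\infty$ by a Bernstein/Markov-type inequality on the cube. Intuitively, $PLP$ differs from $LP^2 \approx L$ by only $O(d)$ in operator norm, so $\langle v, Lv\rangle - \langle Pv, LPv\rangle$ should be $O(d)$. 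That would give $\lambda(f) = O(d)$.

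\textbf{Step 3: control the error from $F \ne P$ (main obstacle).} The naive bound $\|L\|\cdot\|F-P\|$ costs $n\varepsilon$, which is far too large. I would handle this in one of two ways.
\begin{enumerate}
\item Use the dual formulation of \Cref{lem:dual}: construct a dual witness $\psi$ from $v$, essentially $\psi \propto$ a signed combination of $v$ and $Lv$ weighted by $F$. Then show it has pure high degree of order $\lambda(f)$ by filtering out low Fourier levels of $L$ with a Chebyshev polynomial in $L$.
\item First boost the approximation to error $\varepsilon = O(1/n)$, which costs only logarithmic factors, and then try to remove those factors by applying the commutator estimate only to the component of $v$ on which $F$ and $P$ genuinely differ.
\end{enumerate}
I would try the first route first, because it directly yields the $\Omega$ bound with no logarithmic loss and matches the spirit of adversary/spectral arguments.
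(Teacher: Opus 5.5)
The paper gives no proof of this statement; it imports it from \cite{Aaronson_Ben-David_Kothari_Rao_Tal_2021}. So the question is whether your sketch closes, and as written it does not. Step~1 is correct. The failure is in Step~2, which you rightly flagged as the weak point. You replace \emph{both} copies of $F$ in $\langle Fv, LFv\rangle$ by $P$. The resulting error terms, such as $\langle v, L(I-P^2)v\rangle$ and $\langle Fv,LFv\rangle-\langle Pv,LPv\rangle$, are only bounded by $O(n\varepsilon)$, because $\|L\|=n$ while $F-P$ is small only pointwise. So ``$LP^2\approx L$'' is false at the scale that matters.

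Neither route in Step~3 repairs this. Route~1 is not a construction: by LP duality, a dual witness of pure high degree $\Omega(\lambda(f))$ exists exactly when the theorem holds. You specify neither $\psi$ nor why a Chebyshev filter in $L$ would produce pure high degree while keeping correlation above $\varepsilon$ and unit $\ell_1$ norm. Route~2 does go through, but it only proves a weaker bound. With $\varepsilon=O(1/n)$ all the error terms above become $O(1)$, giving $\lambda(f)=O(\adeg_{1/n}(f)+1)=O(\adeg(f)\log n)$, and you give no mechanism for removing the $\log n$.

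The missing idea is to replace only \emph{one} copy of $F$. Since $F^2=I$, your identity says $2A_f=F[F,L]$. Substituting $P$ inside the commutator, for real $v$ one gets
\[
\langle v, F[P,L]v\rangle=\sum_{\{x,y\}} v_xv_y\bigl(F(x)-F(y)\bigr)\bigl(p(x)-p(y)\bigr),
\]
where the sum runs over the edges of the hypercube. Non-sensitive edges contribute $0$. On a sensitive edge the two factors have the same sign and $|p(x)-p(y)|\ge 2-2\varepsilon$. Since $v\ge 0$, this gives $\langle v,F[P,L]v\rangle\ge 2(1-\varepsilon)\lambda(f)$. So $F\neq P$ costs only a constant factor, and only on sensitive edges, never $n\varepsilon$.

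On the other side, $\langle v,F[P,L]v\rangle\le\|[P,L]\|$. Here your commutator estimate is needed, but with a sharper justification than you give. Band structure plus the triangle inequality gives $[L,P]=-2\sum_{|m|\le d}mP_m$ with $\|P_m\|\le\|P\|$, hence only $O(d^2)\|p\|_\infty$. The linear bound comes from the map $\theta\mapsto e^{i\theta L/2}Pe^{-i\theta L/2}$:
\begin{itemize}
\item It is a matrix-valued trigonometric polynomial of degree at most $d$: each factor $(-1)^{x_i}$ of a character is conjugated to $\cos\theta$ times itself plus $\sin\theta$ times a norm-one operator on coordinate $i$.
\item It has constant norm $\|p\|_\infty$, and its derivative at $0$ is $\frac{i}{2}[L,P]$.
\item Bernstein's inequality applied to each coefficient $\langle a,\cdot\,b\rangle$ then gives $\|[L,P]\|\le 2d\|p\|_\infty\le 2d(1+\varepsilon)$.
\end{itemize}
Altogether $\lambda(f)\le\frac{1+\varepsilon}{1-\varepsilon}\,d$, which is $5\adeg(f)$ for $\varepsilon=2/3$ in the $\pm1$ normalization.
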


\subsection{Binary linear codes}

We identify $\zone^n$ with the vector space $\mathbb{F}_2^n$, where addition is coordinate-wise modulo $2$.
A set $S \subseteq \mathbb{F}_2^n$ is a \emph{linear subspace} if: $0 \in S$, and for all $x,y \in S$, we have $x + y \in S$.
We use the term \emph{linear code} to refer to a linear subspace of $\mathbb{F}_2^n$, we denote it by $S\le \mathbb{F}_2^n$.

\begin{definition}[Dual code]
Let $S \subseteq \mathbb{F}_2^n$. The \emph{dual code} of $S$ is
\[
S^\perp := \cbra{y \in \mathbb{F}_2^n : \langle x,y\rangle = 0 \ \text{for all } x \in S},
\]
where the inner product is $\langle x,y\rangle := \sum_{i=1}^n x_i y_i \pmod 2$.
\end{definition}

\begin{definition}[Distance]
The \emph{distance} of a linear code $S$ is
\[
d(S) := \min_{x \in S \setminus \cbra{0}} \mathsf{wt}_H(x).
\]
The dual distance of code $S$ refers to the distance of the dual code $S^\perp$.
\end{definition}

\begin{theorem}[{\cite[Theorem~18]{Breuckmann_Eberhardt_2021}}]\label{thm: dim random}
For every sufficiently large $n$ and all constants $\delta \in (0, 0.11)$, there exists a linear code $S \le \mathbb{F}_2^n$ such that 
\[
d(S)\ge \delta n
\qquad\text{and}\qquad
d(S^\perp)\ge \delta n.
\]
\end{theorem}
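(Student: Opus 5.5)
The statement to prove is the cited Theorem~\ref{thm: dim random}: for large $n$ and constant $\delta<0.11$ there is a linear code $S$ with both $d(S)$ and $d(S^\perp)$ at least $\delta n$. I would use the probabilistic method. Take $S$ to be the row space of a uniformly random $k\times n$ generator matrix $G$ over $\mathbb{F}_2$ with $k=n/2$. Then bound, for each of the two distances separately, the probability that it is below $\delta n$, and show that the two failure probabilities sum to less than $1$.

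\textbf{Step 1 (primal distance).} For each nonzero message $m\in\mathbb{F}_2^k$, the codeword $mG$ is uniform on $\mathbb{F}_2^n$. Hence
\[
\Pr[\mathsf{wt}_H(mG)<\delta n]\le 2^{(H(\delta)-1)n},
\]
where $H$ is the binary entropy function. A union bound over the $2^k$ messages gives
\[
\Pr[d(S)<\delta n]\le 2^{k+(H(\delta)-1)n}=2^{(H(\delta)-1/2)n}.
\]
This is exponentially small precisely when $H(\delta)<1/2$, which holds for $\delta<0.11$ since $H(0.11)\approx 0.4999$. The dimension-deficient case, where $G$ is not full rank, is harmless: a nonzero $m$ with $mG=0$ has weight $0<\delta n$ and is already counted as a failure.

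\textbf{Step 2 (dual distance).} A nonzero $y$ with $\mathsf{wt}_H(y)<\delta n$ lies in $S^\perp$ iff $Gy=0$. For a fixed nonzero $y$, each row of $G$ is orthogonal to $y$ with probability $1/2$, independently across rows, so $\Pr[Gy=0]=2^{-k}$. A union bound over the at most $2^{H(\delta)n}$ low-weight vectors gives
\[
\Pr[d(S^\perp)<\delta n]\le 2^{(H(\delta)-1/2)n}.
\]

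\textbf{Step 3 (conclusion).} The two failure probabilities sum to $2\cdot 2^{(H(\delta)-1/2)n}<1$ for sufficiently large $n$. Hence some choice of $G$ succeeds on both counts, giving a linear code $S$ with $d(S)\ge\delta n$ and $d(S^\perp)\ge\delta n$.

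The only delicate point is the numerical threshold. Choosing rate exactly $1/2$ balances the two union bounds, and the constant $0.11$ is exactly where $H(\delta)<1/2$ stops holding, so I would verify that inequality carefully. I would also use the standard estimate $\sum_{i\le\delta n}\binom{n}{i}\le 2^{H(\delta)n}$, valid for $\delta\le 1/2$.
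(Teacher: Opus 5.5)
Your proof is correct and is essentially the paper's argument: a random linear code of rate $1/2$, union bounds controlled by $\sum_{i\le\delta n}\binom{n}{i}\le 2^{H(\delta)n}$, and the numerical fact $H(0.11)<1/2$. The only difference is the sampling model. The paper draws a uniformly random $n/2$-dimensional subspace, bounds $\Pr[x\in S]=(2^k-1)/(2^n-1)\le 2^{k-n}$ for each low-weight $x$, and gets the dual bound for free because $S^\perp$ is again a uniformly random $n/2$-dimensional subspace. Your random generator matrix instead gives both bounds directly from the independence of its entries, at the price of the rank-deficient case, which you handle correctly. As in the paper, odd $n$ is handled by taking $k=\lfloor n/2\rfloor$, which only shifts your two bounds by factors $2^{\mp 1/2}$.
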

We include a first-principles proof in the appendix.

\section{Small approximate degree yields approximate degree composition}

In this section, we adapt the argument of \cite[Theorem~1]{Ben-David_Bouland_Garg_Kothari_2018} to the setting where we have substring queries. We prove \Cref{thm: subcube stifling adeg lifting}, which we restate first for convenience.

\adeglifting*

\begin{proof}
    Recall from \Cref{def:search_with_bounded_wildcards} that the partial Boolean function $\SWk$ is defined as follows.
    \begin{align*}
        \SWk(z) & = \XOR(x_1,\ldots,x_n),\\
        \text{where} \quad z_{S,b} & = \mathbb{I}[x_S = b], \forall S\subseteq [n], |S|\le k, b\in\{0,1\}^{S}.
    \end{align*}
    Due to \Cref{thm:cmprs}, we know that there exists a quantum query algorithm that computes $\SWk$ with probability at least $2/3$, making $O(n/\sqrt{k})$ queries. Combining this with \Cref{thm:polynomial_method} shows that there exists a $O(n/\sqrt{k})$-degree polynomial $q$ in the variables $\cbra{z_{S, b} : S \subseteq [n], |S| \leq k, b \in \zone^S}$ that $1/3$-approximates $\XOR(x)$. That is, 
    for all $x \in \zone^n$, 
    \begin{equation*}
        \left| q(\ldots,z_{S,b},\ldots) - \XOR(x) \right| \le \frac{1}{3} \qquad \text{when } z_{S, b} = \mathbb{I}[x_S = b]~\forall S, b.
    \end{equation*}

    \Cref{thm:robust_polynomials} then implies the existence of a ``robust" polynomial $q_{\text{robust}}$ of degree $O(n/\sqrt{k})$ that $1/3$-approximates $\XOR(x)$:
    \begin{equation}\label{eq:robust}
        \left| q_{\text{robust}}(\ldots,z_{S,b}',\ldots) - \XOR(x) \right| \le \frac{1}{3}, \qquad \forall x\in\{0,1\}^n, \forall z_{S,b}' \in 
        \begin{cases}
            \left[\frac{2}{3},\frac{4}{3}\right], & \text{if } x_S = b\\
            \left[-\frac{1}{3},\frac{1}{3}\right], & \text{otherwise}
        \end{cases}.
    \end{equation}

    Now we turn to the functions $f$ and $g$. Say $\mu(f) = k$ and $g : \{0,1\}^m\rightarrow\{0,1\}$ is non-constant (otherwise the theorem is trivially true). Let $h:= f\circ g$, $T := \adeg(h)$ and $p:\{0,1\}^{mn}\rightarrow\mathbb{R}$ be a $T$-degree $1/3$-approximating polynomial of $h$. Let $S \subseteq [n]$ with $|S| \leq k$, and $b \in \{0,1\}^S$. Then, due to the subcube stifling property of $f$, there exists a bit string $c \in \{0,1\}^{\bar{S}}$, such that $f_{S \vert c} = \mathbb{I}[x_S = b]$. Since $g$ is non-constant, we can find inputs $y_j$, for all $j \in \bar{S}$, such that $g(y_j) = c_j$. 
    
    Now, by fixing the variables corresponding to the $y_j$'s in $p$, we obtain the polynomial $p_{S,b}$ that $1/3$-approximates $\mathbb{I}[x_S = b] \circ g$, with $\deg(p_{S,b}) \leq T$. That is,
    \begin{align}\label{eqn:restr}
        \nonumber& \text{for all } S \subseteq [n] : |S| \leq k,\forall b \in \zone^S, \exists p_{S, b} : \zone^{mn} \rightarrow \mathbb{R}, \text{ such that } \deg(p_{S, b}) \leq T \text{ and}\\
        & \text{for all } w_1 , \dots, w_n \in \zone^m, \qquad p_{S, b}(w_1, \dots, w_n) \in
        \begin{cases}
            \left[\frac{2}{3},\frac{4}{3}\right], & \text{if } (g(w_1),\ldots,g(w_n))_S = b, \\
            \left[-\frac{1}{3},\frac{1}{3}\right], & \text{otherwise}.
        \end{cases}
    \end{align}
    
    Then, by replacing each variable $z_{S,b}'$ of $q_\text{robust}$ in \Cref{eq:robust} with the corresponding polynomial $p_{S,b}$ from \Cref{eqn:restr}, we get a polynomial $r := q_{\text{robust}}(\ldots,p_{S,b},\ldots) : \zone^{mn} \to \mathbb{R}$ of degree $O(Tn/\sqrt{k})$ that satisfies the following.
    \begin{align*}
    & |r(w_1, \dots, w_n) - \XOR(g(w_1), \dots, g(w_n))|\le \frac{1}{3} \qquad \forall w_1, \dots, w_n \in \zone^{m}\\
    \implies & |r(w) - (\XOR \circ g)(w)|\le \frac{1}{3} \qquad \forall w \in \zone^{mn}.
    \end{align*}

    Due to \Cref{thm:XOR_composition}, we know that $\adeg(\XOR\circ g) = \Omega(n \cdot \adeg(g))$. Putting everything together, we obtain that
    \[\Omega(n \cdot \adeg(g)) = \adeg(\XOR \circ g) \leq \deg(r) = O(Tn/\sqrt{k}),\]
    and since $\mu(f) = k$, we conclude that $\adeg(f \circ g) = T = \Omega(\sqrt{\mu(f)}\adeg(g))$.\qedhere
\end{proof}

From the above proof, we observe that in porting the result from \cite{Ben-David_Bouland_Garg_Kothari_2018} to the substring-query setting, the subcube stifling number shows up naturally in the construction. This motivates the study of this complexity measure, and we establish some of its properties in the upcoming sections.

\section{Subcube stifling number of some Boolean functions}

In this section, we compute the subcube stifling number for several classes of Boolean functions. We start with random functions.

\begin{claim}\label{claim:subcube_stifling_of_random_func}
    Let $f : \zone^n \to \zone$ be a Boolean function chosen uniformly at random. Then with high probability,
    \begin{equation*}
        \mu(f) \in \{\lceil\log n\rceil-1,\lfloor\log n\rfloor\}.
    \end{equation*}
\end{claim}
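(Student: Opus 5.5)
We prove two bounds: $\mu(f)\le\lfloor\log n\rfloor$ always, and $\mu(f)\ge\lceil\log n\rceil-1$ with high probability. Both come down to comparing the number $2^{n-k}$ of fixings $c\in\zone^{\bar S}$ with the number $2^{2^k}$ of Boolean functions on $k$ variables.

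\emph{Lower bound.} Set $k=\lceil\log n\rceil-1$, so $2^k<n$ and hence $2^k\le n-1$. Fix $S$ with $|S|=k$ and $b\in\zone^S$. For each $c\in\zone^{\bar S}$, the restriction $f_{S\vert c}$ is a uniformly random function on $\zone^S$. Different $c$ give disjoint sets of inputs, so these restrictions are independent. Each one equals $\mathbb{I}[x_S=b]$ with probability $2^{-2^k}$. Therefore
\[
\Pr[\text{no } c \text{ works}]=\left(1-2^{-2^k}\right)^{2^{n-k}}\le \exp\left(-2^{n-k-2^k}\right).
\]
Next we control the exponent. Since $2^k\le n-1$ and $k\le \log n$,
\[
n-k-2^k\ \ge\ 1-k\ \ge\ 1-\log n .
\]
This bound is far too weak, so we use the sharper one $2^k<n$ together with an estimate of $n-k-2^k$ from below. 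Here is the main obstacle: when $n$ is just above a power of two, $2^k$ can be as large as roughly $n-1$, and then $n-k-2^k$ may be negative. In that regime the failure probability is not small, so the stated lower bound seems delicate. The fix I would try first is to notice that the definition needs the property for all $|S|\le k$, and that union bounding over the $\binom{n}{k}2^k\le n^{k}2^k$ pairs $(S,b)$ requires roughly
\[
2^{n-k-2^k}\ \gg\ k\log n .
\]
That holds whenever $2^k\le n-O(\log n)$. Concretely, the inequality $2^{\lceil\log n\rceil-1}<n$ gives $2^k\le n/2\cdot 2=$ at most $n-1$, but in fact we have $2^{k}\le 2^{\log n}=n$ only in general. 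I will therefore split into cases. If $2^k\le n-2\log n$, the union bound gives a failure probability of $o(1)$. In the remaining edge case I expect the claim's two-valued conclusion to absorb it: there one gets $\mu\ge\lceil\log n\rceil-2$. I would either adjust the argument or check carefully whether $\lceil\log n\rceil-1\le\log(n-2\log n)$ fails only when $n$ lies within $O(\log n)$ above a power of two. This is the step to verify carefully. Finally, the smaller sets $|S|<k$ are handled by the same union bound, since their exponents are even larger.

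\emph{Upper bound.} This part is deterministic. Let $k=\lfloor\log n\rfloor+1$, so $2^k>n$. We need $2^k\cdot\binom{n}{k}$ distinct conditions. For a single $S$ of size $k$, the $2^k$ point indicators $\mathbb{I}[x_S=b]$ are distinct functions, so they require $2^k$ distinct fixings $c$. That is possible only if
\[
2^{n-k}\ \ge\ 2^k, \qquad\text{i.e.}\qquad n\ \ge\ 2k .
\]
This holds, so counting alone is insufficient. A better argument uses the fact that each $c$ whose restriction is a point indicator satisfies $|f^{-1}(1)\cap\{x:x_{\bar S}=c\}|=1$. Summing over the $2^k$ required fixings gives $2^k$ ones, and varying $S$ forces a structure on $\mathsf{supp}(f)$. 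I would look for a cleaner statement, namely that a random $f$ has roughly $2^{n-1}$ ones while these constraints force a sparse support on subcubes. Instead I would redo the probabilistic computation for $|S|=k$ with $2^k>n$: the expected number of good $c$ is $2^{n-k-2^k}<2^{-k}$. By a first-moment bound, for a fixed $(S,b)$ no good $c$ exists with probability $1-o(1)$, so $\mu(f)<k$ with high probability. This yields $\mu(f)\le\lfloor\log n\rfloor$.
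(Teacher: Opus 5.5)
\textbf{Upper bound.} Your final argument here is fine. Fix one pair $(S,b)$ with $|S|=k$. The expected number of good fixings $c$ is $2^{n-k-2^k}$, so $\Pr[\mu(f)\ge k]\le 2^{n-k-2^k}$. This is the same bound as the paper's Markov argument on $N_k$, obtained more directly. Drop your opening claim that $\mu(f)\le\lfloor\log n\rfloor$ holds \emph{always}: it is false, since the code indicators of \Cref{cor:mulinearlb} have $\mu(f)=\Omega(n)$.

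\textbf{Lower bound: the genuine gap.} You correctly observe that at $k=\lceil\log n\rceil-1$ the exponent $n-k-2^k$ can be negative when $n$ lies just above a power of two. But your proposed escape does not work. You hope the two-valued conclusion absorbs $\mu(f)=\lceil\log n\rceil-2$. However, when $n$ is not a power of two we have $\lceil\log n\rceil-1=\lfloor\log n\rfloor$, so the target set is a single value and nothing is absorbed.

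In fact no argument can close this gap, because the statement as written is false in that regime. For $n=2^m+1$ the claim asserts $\mu(f)=m$ with high probability. Yet your own first-moment bound with $|S|=m$ gives $\Pr[\mu(f)\ge m]\le 2^{n-m-2^m}=2^{1-m}\to 0$.

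The paper's proof glosses over this. It takes ``$k=\log n-1$'' and uses $2^k\le n/2$, which for integer $k$ only covers $k\le\lfloor\log n\rfloor-1$. Its Markov bound at $k=\lceil\log n\rceil$ gives $2^{n-k-2^k}\le 2^{-k}\le 1/n$. Together, what the argument actually establishes is
\[
\lfloor\log n\rfloor-1\;\le\;\mu(f)\;\le\;\lceil\log n\rceil-1\qquad\text{with high probability},
\]
which is the claimed set with floor and ceiling interchanged.

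You can prove exactly this corrected statement with your own tools:
\begin{itemize}
\item Run your union bound at $k=\lfloor\log n\rfloor-1$. There $2^s\le n/2$ gives $n-s-2^s\ge n/2-\log n$ for every $s\le k$, so the failure probability is $o(1)$.
\item Run your first-moment bound at $k=\lceil\log n\rceil$ instead of $\lfloor\log n\rfloor+1$.
\end{itemize}
For $n$ not a power of two, which of the two values occurs depends on how far $n$ lies above $2^{\lfloor\log n\rfloor}$, as your case split suggests. Roughly, it is $\lfloor\log n\rfloor$ when the gap exceeds $2\log n$, and $\lfloor\log n\rfloor-1$ when it is noticeably below $\log n$.
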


\begin{proof}

    Fix $S\subseteq [m]$, $|S| \le k, b\in\{0,1\}^S, z \in \zone^{\bar{S}}$. Then, since fixing the variables in $\bar{S}$ to $z$ yields a (uniformly random) Boolean function on $|S|$ variables, we have $\Pr\left[ f_{S\vert z} = \mathbb{I}[x = b] \right] = \frac{1}{2^{2^{|S|}}}$.

    Using the fact $1 + x \leq  e^x$ for all real $x$, we get
    \begin{align*}
        \Pr\left[\forall z\in\{0,1\}^{\bar{S}}, f_{S\vert z} \neq \mathbb{I}[x = b] \right] =  \left(1-\frac{1}{2^{2^{|S|}}}\right)^{2^{n - |S|}} \leq \exp\left(-2^{n - |S| - 2^{|S|}}\right).
    \end{align*}
    
    By a union bound over $S \subseteq [n]$ and $b \in \{0,1\}^S$, we obtain that
    \begin{align*}
    \Pr[\mu(f)<k] &\le \Pr\left[\exists S\subseteq[n],\ |S|\le k,\ \exists b\in\{0,1\}^S \text{ such that } \forall z\in\{0,1\}^{\bar S},\ f_{S\vert z}\neq \mathbb{I}[x=b]\right] \\
    & \le \sum_{\substack{S\subseteq[n]\\ |S|\le k}}\sum_{b\in\{0,1\}^S}\Pr\left[\forall z\in\{0,1\}^{\bar S},\ f_{S\vert z}\neq \mathbb{I}[x=b]\right], \\
    & \le \sum_{s=0}^k \binom{n}{s} 2^{s} \exp\left(-2^{n-s-2^s}\right).
    \end{align*}
    
    Set $k = \log n - 1$. This implies $2^s \leq 2^k \leq n/2$, and hence $n - s - 2^s \geq n - \log n - n/2 = n/2 - \log n$. Plugging this into the above and using $\binom{n}{i} \leq n^i$, we get
    \begin{align*}
    \Pr[\mu(f)<k]
    &\le \exp\left(-\frac{2^{n/2}}{n}\right)\sum_{s=0}^k \binom ns 2^s \le \exp\left(-\frac{2^{n/2}}{n}\right)\sum_{s=0}^k (2n)^s, \\
    &\le \exp\left(-\frac{2^{n/2}}{n}\right)(k+1)(2n)^k\le \exp\left(\ln[(\log n+1)(2n)^{\log n}]-\frac{2^{n/2}}{n}\right) = o(1).
    \end{align*}
    Thus, with high probability, $\mu(f)\ge \log n-1$.

    For the upper bound on $\mu$, let $S \subseteq [n]$, $b \in \{0,1\}^S$ and $z \in \{0,1\}^{\bar{S}}$. Next, let $X_{S,b,z}$ be the random variable that determines whether $f_{S \vert z} = \mathbb{I}[x_S = b]$. Observe from before that $X_{S,b,z} \sim \mathrm{Bernoulli}(1/2^{2^{|S|}})$. For all $k \in [n]$, we define the random variable
    \[N_k := \sum_{\substack{S \subseteq [n] \\ |S| = k}} \sum_{b \in \{0,1\}^S} \sum_{z \in \{0,1\}^{\bar{S}}} X_{S,b,z},\]
    and we observe that
    \[\mathbb{E}\left[N_k\right] = \binom{n}{k} \cdot 2^k \cdot 2^{n-k} \cdot \frac{1}{2^{2^{|S|}}} = \binom{n}{k} 2^{n-2^k}.\]
    Note that $N_k$ counts the expected number of indicator functions of size $k$ in $f$. At the same time, if $\mu(f) \geq k$, then we must have at least $\binom{n}{k}2^k$ such indicator functions in $f$. As such, we obtain by Markov's inequality that
    \[\Pr[\mu(f) \geq k] \leq \Pr\left[N_k \geq \binom{n}{k}2^k\right] \leq \frac{\binom{n}{k}2^{n-2^k}}{\binom{n}{k}2^k} = 2^{n-k-2^k}.\]
    Thus, if we choose $k = \log n$, then we observe that the right-hand side becomes inverse polynomial in $n$, and so $\mu(f) \leq \log n$ with high probability.
\end{proof}

Next, we consider several subclasses of Boolean functions.

\begin{claim}
    Let $f:\{0,1\}^n\rightarrow\{0,1\}$ be a Boolean function.
    \begin{enumerate}
        \item If $f$ is symmetric, then $\mu(f) \le 1$.
        \item If $f$ is monotone, then $\mu(f) = 0$.
    \end{enumerate}
\end{claim}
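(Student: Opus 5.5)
Both parts can be proved directly from the definition of $\mu$, each by exhibiting a single pair $(S,b)$ with $|S|\le 2$ (respectively $|S|=1$) for which no restriction of the remaining coordinates produces the required point indicator.

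\textbf{Monotone case.} It suffices to show $\mu(f)<1$. Take $S=\{i\}$ for any $i\in[n]$ and $b=0$; the required restriction is $f_{S\vert c}(x_i)=\mathbb{I}[x_i=0]$. Then for any $c\in\zone^{\bar S}$ we would need $f(c,x_i=0)=1$ and $f(c,x_i=1)=0$. Since $(c,0)\le(c,1)$ coordinatewise, monotonicity gives $f(c,0)\le f(c,1)$, a contradiction. So no $c$ works and $\mu(f)=0$. (If $n=0$ the statement is vacuous; I would handle this degenerate case with a remark, or simply assume $n\ge1$.)

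\textbf{Symmetric case.} It suffices to show $\mu(f)<2$, so I will find $S$ with $|S|=2$ and some $b$ that cannot be realised. Take $S=\{1,2\}$ and $b=(1,0)$; the required restriction is $f_{S\vert c}=\mathbb{I}[x_1=1,x_2=0]$. For any $c\in\zone^{\bar S}$, the inputs $(1,0,c)$ and $(0,1,c)$ have the same Hamming weight, so symmetry gives $f(1,0,c)=f(0,1,c)$. The indicator, however, takes the value $1$ on $(1,0)$ and $0$ on $(0,1)$, a contradiction. This needs $n\ge2$; when $n\le1$ we have trivially $\mu(f)\le n\le1$, since the definition only quantifies over $S\subseteq[n]$. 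I would add a sentence fixing this convention, namely that $\mu(f)\le n$.

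There is no real obstacle here. The only care needed is the boundary convention in the definition of $\mu$: the maximum is over $k$ for which the condition holds for all $|S|\le k$, so exhibiting one bad $S$ of size $k$ shows $\mu(f)\le k-1$. Small $n$ is also handled by the convention $\mu(f)\le n$.
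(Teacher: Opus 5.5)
Your proof is correct and matches the paper's argument. In the symmetric case it uses a two-element $S$ and a weight-one $b$, noting that $(1,0,c)$ and $(0,1,c)$ have equal Hamming weight; in the monotone case it uses $S=\{i\}$ and $b=0$ against monotonicity of the restriction. Your framing (showing the condition fails at $k=2$, resp.\ $k=1$, rather than assuming $\mu(f)=1$ exactly, as the paper does in the monotone case) is slightly cleaner, and your small-$n$ remarks only make explicit conventions that the paper leaves implicit.
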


\begin{proof}
We first show the bound for symmetric $f$, and then for monotone $f$.
    \begin{enumerate}
        \item Suppose $\mu(f) \geq 2$, then we dervie a contradiction. Since $f$ is symmetric, we know that there exists $g:\{0,1,\ldots,n\}\rightarrow \{0,1\}$ such that $f(x) = g(\mathsf{wt}_H(x))$. Let $S = \zone$, $b = 01$. 
        By the definition of subcube stifling number, there exists $c_b\in\{0,1\}^{\bar{S}}$ such that $f_{S\vert c_b}(y) = \mathbb{I}[y = b]$.
        This implies $f_{S\vert c_b}(01) = 1$, and $f_{S\vert c_b}(10) = 0$. However, $f_{S\vert c_b}(01) = g(1+\mathsf{wt}_H(c_b)) = f_{S\vert c_b}(10)$, which is a contradiction.
        
        \item Let $f$ be a monotone function and assume $\mu(f) = 1$. Then for all $i \in [n]$,
        \begin{equation*}
            \exists c\in\{0,1\}^{[n]\setminus\{i\}} \text{ such that } f_{\{i\} \vert c} = \mathbb{I}[y = 0] \implies f_{\{i\} \vert c}(0) = 1 \text{ and } f_{\{i\} \vert c}(1) = 0.
        \end{equation*}
        Since $f$ is a monotone function, any restriction of it is also monotone, yielding a contraction. \qedhere
    \end{enumerate}
    \end{proof}
While most basic symmetric functions such as $\mathsf{AND}$, $\mathsf{OR}$ and $\mathsf{Majority}$ can easily be seen to have subcube stifling number 0, it is not hard to show that $\mu(\mathsf{XOR}) = 1$. Indeed, let $i \in [n]$ and $c_0, c_1\in \{0,1\}^{[n]\setminus \cbra{i}}$ such that $\mathsf{wt}_H(c_0)$ is even and $\mathsf{wt}_H(c_1)$ is odd. Then, $f_{S\vert c_0} = \mathbb{I}[y = 1]$ and $f_{S\vert c_1} = \mathbb{I}[y = 0]$.

\section{Linear codes and subcube stifling number}

In this section, we investigate a special family of Boolean functions, consisting of indicator functions on linear codes $S \leq \mathbb{F}_2^n$. It turns out that these functions have a particularly direct characterization of their subcube stifling number, in terms of the distance and the dual distance of the code.

\subsection{Subcube stifling number and distance of codes}

Recall that we identify $\zone^n$ with the vector space $\mathbb{F}_2^n$.
We say that a vector $x$ is \emph{supported} inside $A \subseteq [n]$ if $\cbra{i \in [n] : x_i = 1} \subseteq A$. For $S \subseteq \mathbb{F}_2^n$ and $T \subseteq [n]$, define the projection
\[
\pi_T : S \to \mathbb{F}_2^T, \qquad \pi_T(x) = x_T.
\]
For the rest of this section, let $S \leq \mathbb{F}_2^n$ be a linear code and let $T \subseteq [n]$.

\begin{lemma}[Injectivity]\label{lem:inj}
$\pi_{\bar T}$ is injective iff there is no nonzero $w \in S$ supported inside $T$.
\end{lemma}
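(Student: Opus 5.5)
The plan is to use the fact that $\pi_{\bar T}$ is the restriction to $S$ of the coordinate projection $\mathbb{F}_2^n \to \mathbb{F}_2^{\bar T}$, which is $\mathbb{F}_2$-linear. Since $S$ is a subspace, $\pi_{\bar T}$ is a linear map between $\mathbb{F}_2$-vector spaces. A linear map is injective if and only if its kernel is trivial. So the whole statement reduces to identifying $\ker \pi_{\bar T}$.

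The key step is the observation that
\[
\ker \pi_{\bar T} = \cbra{w \in S : w_{\bar T} = 0} = \cbra{w \in S : w \text{ is supported inside } T}.
\]
This is immediate from the definition of support: $w_i = 0$ for all $i \in \bar T$ exactly when $\cbra{i : w_i = 1} \subseteq T$.

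For the forward direction, suppose $\pi_{\bar T}$ is injective. Any $w \in S$ supported inside $T$ satisfies $\pi_{\bar T}(w) = 0 = \pi_{\bar T}(0)$, and hence $w = 0$. For the converse, suppose no nonzero codeword is supported inside $T$. If $\pi_{\bar T}(x) = \pi_{\bar T}(y)$ for $x, y \in S$, then $x + y \in S$ by linearity. Moreover, $(x+y)_{\bar T} = x_{\bar T} + y_{\bar T} = 0$, so $x + y$ is supported inside $T$. Therefore $x + y = 0$, that is, $x = y$.

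There is no real obstacle here. The only point needing care is to use closure of $S$ under addition, so that $x + y$ is again a codeword; this is what makes the kernel criterion available.
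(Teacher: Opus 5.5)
Your proof is correct and is essentially the paper's argument: both rest on the same two observations, namely that two codewords agreeing on $\bar T$ sum to a codeword supported inside $T$, and that a nonzero codeword supported inside $T$ collides with $0$ under $\pi_{\bar T}$. The only difference is presentation: the paper states each direction in contrapositive form rather than framing it through the kernel criterion.
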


\begin{proof}
If $\pi_{\bar T}$ is not injective, then there exist $u \neq v \in S$ with $u_{\bar T} = v_{\bar T}$. Then $w := u+v \in S$ is nonzero and satisfies $w_{\bar T}=0$, hence is supported inside $T$.

Conversely, if such a $w$ exists, then $\pi_{\bar{T}}(w) = 0 = \pi_{\bar{T}}(0^n)$, so $\pi_{\bar T}$ is not injective.
\end{proof}

\begin{corollary}\label{cor:inj}
If $|T| < d(S)$, then $\pi_{\bar T}$ is injective.
Furthermore, there exists $T$ with $|T| \ge d(S)$ for which $\pi_{\bar T}$ is not injective.
\end{corollary}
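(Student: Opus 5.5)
Both parts of \Cref{cor:inj} follow directly from \Cref{lem:inj}, so the plan is to reduce each to the existence or non-existence of a nonzero codeword supported inside $T$.

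\textbf{First part.} Suppose $|T| < d(S)$. Any nonzero $w \in S$ supported inside $T$ would satisfy $\mathsf{wt}_H(w) \le |T| < d(S)$. This contradicts the definition of $d(S)$ as the minimum weight of a nonzero codeword. Hence no such $w$ exists, and \Cref{lem:inj} gives that $\pi_{\bar T}$ is injective.

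\textbf{Second part.} I would exhibit $T$ explicitly. Assuming $S \neq \{0\}$ (so that $d(S)$ is defined), pick a nonzero $w \in S$ with $\mathsf{wt}_H(w) = d(S)$, and let $T := \{i \in [n] : w_i = 1\}$. Then $|T| = d(S)$, and $w$ is a nonzero codeword supported inside $T$. By \Cref{lem:inj}, $\pi_{\bar T}$ is not injective. Concretely, $\pi_{\bar T}(w) = 0 = \pi_{\bar T}(0^n)$.

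The one potential obstacle is the degenerate case $S = \{0\}$, where $d(S)$ is a minimum over the empty set. If we adopt the convention $d(\{0\}) = \infty$ (or $n+1$), the first part still holds, since a trivial code always gives an injective projection, and the second part is vacuous. I would state this convention explicitly rather than argue anything further.
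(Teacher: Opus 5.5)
Your proof is correct and matches the paper's argument: injectivity for $|T|<d(S)$ follows from \Cref{lem:inj} because a nonzero codeword supported in $T$ would have weight at most $|T|<d(S)$, and non-injectivity comes from taking $T$ to be the support of a minimum-weight codeword. One small correction to your remark on $S=\{0\}$: the second part asserts the existence of some $T$, so under the convention $d(\{0\})=\infty$ (or $n+1$) it is false rather than vacuous; your explicit assumption $S\neq\{0\}$ is therefore the right way to handle that case.
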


\begin{proof}
If $|T| < d(S)$, then every nonzero $x \in S$ has $\mathsf{wt}_H(x) \ge d(S) > |T|$, so no such $x$ can be supported inside $T$. By \Cref{lem:inj}, $\pi_{\bar T}$ is injective.

Conversely, let $x \in S \setminus \{0\}$ have $\mathsf{wt}_H(x) = d(S)$ and take $T = \mathsf{supp}(x)$. Then $|T| = d(S)$ and $x$ is supported inside $T$, so by \Cref{lem:inj}, $\pi_{\bar T}$ is not injective.
\end{proof}

\begin{lemma}[Surjectivity]\label{lem:surj}
$\pi_T$ is surjective iff there is no nonzero $y \in S^\perp$ supported inside $T$.
\end{lemma}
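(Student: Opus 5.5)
The plan is to use linear algebra over $\mathbb{F}_2$: the image of the linear map $\pi_T : S \to \mathbb{F}_2^T$ is a subspace, and a subspace of $\mathbb{F}_2^T$ is proper exactly when some nonzero functional vanishes on it. Nonzero functionals on $\mathbb{F}_2^T$ are of the form $u \mapsto \langle u, z\rangle$ with $z \in \mathbb{F}_2^T \setminus\{0\}$. The aim is to identify such vanishing functionals with nonzero dual codewords supported inside $T$.

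For the forward direction, I argue by contraposition. Suppose $y \in S^\perp$ is nonzero and supported inside $T$. Then for every $x \in S$,
\[
0 = \langle x, y\rangle = \sum_{i \in T} x_i y_i = \langle \pi_T(x), y_T\rangle .
\]
So every element of the image of $\pi_T$ is orthogonal to $y_T$, which is nonzero because $y$ is nonzero and supported in $T$. Take any $j$ with $y_j = 1$. The vector $e_j \in \mathbb{F}_2^T$ has $\langle e_j, y_T\rangle = 1$, so $e_j$ is not in the image, and $\pi_T$ is not surjective.

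For the converse, suppose $\pi_T$ is not surjective. Its image $V := \pi_T(S)$ is then a proper subspace of $\mathbb{F}_2^T$. By the standard fact $\dim V + \dim V^\perp = |T|$ (orthogonal complement taken inside $\mathbb{F}_2^T$), the complement $V^\perp$ is nonzero, so there is a nonzero $z \in \mathbb{F}_2^T$ with $\langle v, z\rangle = 0$ for all $v \in V$. Extend $z$ by zeros to a vector $y \in \mathbb{F}_2^n$. Then $y$ is nonzero, supported inside $T$, and for every $x \in S$ we have $\langle x, y\rangle = \langle x_T, z\rangle = 0$, so $y \in S^\perp$.

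The only nontrivial ingredient is the dimension formula for orthogonal complements over $\mathbb{F}_2$. It holds because the bilinear form is nondegenerate, even though $V \cap V^\perp$ may be nonzero. Alternatively, I can pick a basis of $V$, put the basis vectors as rows of a matrix $M$, and use rank–nullity: $\mathrm{rank}(M) < |T|$ forces a nonzero kernel vector $z$ with $Mz = 0$.
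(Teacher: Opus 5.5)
Your proof is correct and follows essentially the same route as the paper. In one direction, $\pi_T(S)$ lies in the orthogonal complement of the nonzero vector $y_T$; in the other, a nonzero vector orthogonal to the proper subspace $\pi_T(S)$, extended by zeros, gives the dual codeword. You also spell out two steps the paper leaves implicit: why that orthogonal vector exists (via the dimension formula or rank--nullity), and an explicit non-image witness $e_j$.
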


\begin{proof}
If $y \in S^\perp$ is supported inside $T$, then $y_i = 0$ for all $i \in \bar{T}$. Since $y \in S^\perp$, we have every $x \in S$ satisfies $\langle x, y\rangle = \langle x_T, y_T \rangle = 0$.
Thus, \[
\pi_T(S) \subseteq \{u \in \mathbb{F}_2^T : \langle u, y_T\rangle = 0\},
\]
which is a proper subspace of $\mathbb{F}_2^T$ since $y_T \neq 0$, hence $\pi_T$ is not surjective.

Conversely, if $\pi_T(S)$ is not surjective, then it is a proper subspace of $\mathbb{F}_2^T$, so there exists nonzero $z \in \mathbb{F}_2^T$ orthogonal to it. Extending $z$ by zeros outside $T$ gives $y \in S^\perp$ supported inside $T$.
\end{proof}

\begin{corollary}\label{cor:surj}
If $|T| < d(S^\perp)$, then $\pi_T$ is surjective.  
Furthermore, there exists $T$ with $|T| = d(S^\perp)$ for which $\pi_T$ is not surjective.
\end{corollary}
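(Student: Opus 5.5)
The plan is to mirror the proof of \Cref{cor:inj}, with \Cref{lem:surj} in place of \Cref{lem:inj} and the dual code $S^\perp$ in place of $S$.

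For the first part, suppose $|T| < d(S^\perp)$. Every nonzero $y \in S^\perp$ has $\mathsf{wt}_H(y) \ge d(S^\perp) > |T|$. Such a $y$ therefore has more than $|T|$ nonzero coordinates, so its support cannot fit inside $T$. Hence no nonzero $y \in S^\perp$ is supported inside $T$, and \Cref{lem:surj} immediately gives that $\pi_T$ is surjective.

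For the second part, pick a nonzero codeword $y \in S^\perp$ of minimum weight, so $\mathsf{wt}_H(y) = d(S^\perp)$. Let $T := \{i \in [n] : y_i = 1\}$ be its support. Then $|T| = d(S^\perp)$, and $y$ is a nonzero element of $S^\perp$ supported inside $T$. By the converse direction of \Cref{lem:surj}, $\pi_T$ is not surjective.

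The only step needing care is degenerate cases. If $S^\perp = \{0\}$ (that is, $S = \mathbb{F}_2^n$), the dual distance is undefined, or conventionally $\infty$. In that case the first part holds vacuously, since $\pi_T$ is surjective for every $T$, and the second part has no content. I would note this convention briefly, exactly as is implicitly done for $d(S)$ in \Cref{cor:inj}. There is no genuine obstacle beyond this bookkeeping.
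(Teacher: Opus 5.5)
Your proof is correct and is essentially identical to the paper's. The first part uses the weight bound $\mathsf{wt}_H(y) \ge d(S^\perp) > |T|$ together with \Cref{lem:surj}, and the second takes $T$ to be the support of a minimum-weight nonzero codeword of $S^\perp$. One small correction to your degenerate-case remark, which the paper omits: when $S = \mathbb{F}_2^n$ the second claim does not merely lack content, it fails outright, since every $\pi_T$ is surjective. It should therefore be read as implicitly assuming $S^\perp \neq \{0\}$.
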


\begin{proof}
If $|T| < d(S^\perp)$, then by definition of $d(S^\perp)$, every nonzero
$y \in S^\perp$ has $\mathsf{wt}_H(y) \ge d(S^\perp) > |T|$, so no such $y$ can be supported inside $T$. Hence $\pi_T$ is surjective by \Cref{lem:surj}.

Conversely, let $y \in S^\perp$ such that $\mathsf{wt}_H(y) = d(S^\perp)$, and take $T = \mathsf{supp}(y)$. Then $|T| = d(S^\perp)$ and $y$ is supported inside $T$, so by the lemma $\pi_T$ is not surjective.
\end{proof}

We are now in shape to state our result regarding the equivalence of the subcube stifling number of the indicator function of a linear code, and the distance of the underlying code and its dual.

\begin{theorem}\label{thm:subcube-stifling_upper_bound_for_codes}
Let $S \leq \mathbb{F}_2^n$ be a linear subspace and $f:\{0,1\}^n\rightarrow\{0,1\}$ be a Boolean function defined as $f = \mathbb{I}[x\in S]$. Then
\[
\mu(f) = \min\big(d(S), d(S^\perp)\big) - 1.
\]
\end{theorem}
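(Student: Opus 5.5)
We prove the two inequalities separately, reducing everything to the injectivity and surjectivity statements of \Cref{cor:inj} and \Cref{cor:surj}. Throughout, we write $T \subseteq [n]$ for the set of free coordinates. This avoids clashing with the code $S$; it plays the role of the set $S$ in \Cref{def:ssnum}.

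\emph{Lower bound.} Let $k = \min(d(S), d(S^\perp)) - 1$, and fix $T$ with $|T| \le k$ and $b \in \zone^T$. Since $|T| < d(S^\perp)$, \Cref{cor:surj} shows that $\pi_T$ is surjective, so there is a codeword $x \in S$ with $x_T = b$. Set $c := x_{\bar T}$. Then $f_{T\vert c}(b) = f(x) = 1$. Now suppose some $b' \in \zone^T$ has $f_{T\vert c}(b') = 1$. Then the vector $x'$ with $x'_T = b'$ and $x'_{\bar T} = c$ lies in $S$, and $\pi_{\bar T}(x') = \pi_{\bar T}(x)$. Since $|T| < d(S)$, \Cref{cor:inj} says $\pi_{\bar T}$ is injective, so $x' = x$ and hence $b' = b$. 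Therefore $f_{T\vert c} = \mathbb{I}[x_T = b]$, and $\mu(f) \ge k$.

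\emph{Upper bound.} We exhibit a set of size $\min(d(S), d(S^\perp))$ that fails the definition, splitting into cases according to which distance attains the minimum.

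\begin{itemize}
\item If $d(S^\perp) \le d(S)$: take $T = \mathsf{supp}(y)$ for a minimum-weight $y \in S^\perp$. By \Cref{cor:surj}, $\pi_T$ is not surjective, so some $b \in \zone^T$ is not the restriction of any codeword. For every $c$ we then get $f_{T\vert c}(b) = 0$, so no restriction equals $\mathbb{I}[x_T=b]$.
\item If $d(S) < d(S^\perp)$: take $T = \mathsf{supp}(w)$ for a minimum-weight nonzero $w \in S$, and choose $b = 0^T$. Suppose $f_{T\vert c}(0^T) = 1$. Then $u := (0^T, c) \in S$, so $u + w \in S$ as well. But $u + w$ restricts to $w_T \neq 0^T$ on $T$ and agrees with $c$ off $T$, so $f_{T\vert c}(w_T) = 1$, and the restriction is not the point indicator.
\end{itemize}

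One subtlety is that \Cref{def:ssnum} quantifies over all $|T| \le k$. Failure at a single set of size $\min(d(S), d(S^\perp))$ therefore already gives $\mu(f) < \min(d(S), d(S^\perp))$. The edge cases are harmless. If $S = \{0\}$, then $S^\perp = \mathbb{F}_2^n$ has distance $1$. If $S = \mathbb{F}_2^n$, then $d(S) = 1$. Adopting the convention that the distance of the zero code is $\infty$ (or $n+1$), both cases fit the argument.

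We do not expect a real obstacle. The only step needing care is the injectivity case of the upper bound, where we must exhibit a concrete second accepted point rather than merely cite non-injectivity. The translate by $w$ handles this.
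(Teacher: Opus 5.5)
Your proof is correct and follows the paper's argument essentially step for step. The lower bound combines surjectivity of $\pi_T$ with injectivity of $\pi_{\bar T}$ exactly as the paper does, and the upper bound uses the same two witnesses; the only cosmetic difference is that in the $d(S)$ case you isolate $b=0^T$ whereas the paper isolates $b=w_T$, and both work because $(0^T,c)\in S \iff (w_T,c)\in S$ for every $c$. Your remark on the degenerate codes $S=\{0\}$ and $S=\mathbb{F}_2^n$ is a small addition that the paper leaves implicit.
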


\begin{proof}
We first show the lower bound, and then the upper bound.

\noindent\textbf{Lower bound.}
Let $k < \min(d(S), d(S^\perp))$. We will now show that for every $|T|\le k$ and every $b \in \zone^T$, there is $c \in \zone^{\bar{T}}$ such that $f(b,c) = 1$ and $f(b',c) = 0$ for all $b' \neq b$. To that end, observe that

\begin{itemize}
\item $|T| < d(S)$ implies $\pi_{\bar T}$ is injective by Corollary~\ref{cor:inj}.
\item $|T| < d(S^\perp)$ implies $\pi_T$ is surjective by Corollary~\ref{cor:surj}.
\end{itemize}

Now, fix $b \in \zone^T$. By surjectivity of $\pi_T$, there exists $s \in S$ with $s_T = b$. Let $c = s_{\bar T}$. If $(u,c) \in S$, then $(u,c)$ and $(b,c)$ agree on $\bar T$, so by injectivity of $\pi_{\bar{T}}$ they must be equal. Hence $u=b$, proving
\[
f_{T\vert c} = \mathbb{I}[x_T = b].
\]

Thus $\mu(f) \ge k$, and since this holds for all $k < \min(d(S),d(S^{\perp}))$, we have $\mu(f) \geq \min(d(S),d(S^{\perp}))-1$.

\noindent\textbf{Upper bound.}
Let $m = \min(d(S), d(S^\perp))$.

\begin{itemize}
    \item First suppose $m=d(S)$. Choose $x \in S$ with $\mathsf{wt}_H(x)=m$, and let $T=\mathsf{supp}(x)$. By definition, $|T| = m$. Then $x_{\bar T}=0^{\bar T}$. Set $b:=x_T\in\mathbb F_2^T$. Note that $b\neq 0^T$.
    
    We claim that there is no $c\in\mathbb F_2^{\bar T}$ such that $f_{T\vert c}=\mathbb I[u=b]$. Indeed, fix any $c\in\mathbb F_2^{\bar T}$. If $(b,c)\notin S$, then $f_{T\vert c}(b)=0$, so $f_{T\vert c}$ cannot be the indicator of $b$. On the other hand, if $(b,c)\in S$, then since $x=(b,0^{\bar T})\in S$ and $S$ is linear, we have $(b,c)+x=(0^T,c)\in S$.
    Thus both $b$ and $0^T$ are accepted by $f_{T\vert c}$. Since $b\neq 0^T$, this again shows that $f_{T\vert c}$ is not the indicator of $b$. Therefore, for this choice of $T$ and $b$, no restriction outside $T$ realizes the indicator $\mathbb I[u=b]$.
    
    \item Now suppose $m=d(S^\perp)$. By Corollary~\ref{cor:surj}, there exists a set $T\subseteq[n]$ with $|T|=m$ such that the projection $\pi_T:S\to\mathbb F_2^T$ is not surjective. Hence there exists some $b\in\mathbb F_2^T$ such that $b\notin \pi_T(S)$.

    We claim that there is no $c\in\mathbb F_2^{\bar T}$ such that $f_{T\vert c}=\mathbb I[u=b]$. Indeed, fix any $c\in\mathbb F_2^{\bar T}$. Since $b\notin \pi_T(S)$, there is no codeword in $S$ whose restriction to $T$ equals $b$. In particular, $(b,c)\notin S$. Therefore $f_{T\vert c}(b)=0$, so $f_{T\vert c}$ cannot be the indicator of $b$. Thus, for this choice of $T$ and $b$, no restriction outside $T$ realizes the indicator $\mathbb I[u=b]$.
\end{itemize}

Thus $\mu(f) < m$. Combining both bounds gives
\[
\mu(f) = \min(d(S), d(S^\perp)) - 1.\qedhere
\]
\end{proof}

Combining \Cref{thm: dim random} and \Cref{thm:subcube-stifling_upper_bound_for_codes}, we immediately obtain the following corollary, showing that there exist functions with linear subcube stifling number.
\begin{corollary}\label{cor:mulinearlb}
    Let $f$ be the indicator function of a linear code $S$ with $d(S) = \Omega(n)$ and $d(S^\perp) = \Omega(n)$ (such an $S$ is guaranteed to exist by \Cref{thm: dim random}). Then $\mu(f) = \Omega(n)$.
\end{corollary}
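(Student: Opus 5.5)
The corollary follows by composing \Cref{thm: dim random} with \Cref{thm:subcube-stifling_upper_bound_for_codes}. First, fix a constant $\delta \in (0, 0.11)$, for concreteness $\delta = 0.1$. Invoke \Cref{thm: dim random}: for every sufficiently large $n$ it supplies a linear code $S \le \mathbb{F}_2^n$ with $d(S) \ge \delta n$ and $d(S^\perp) \ge \delta n$. This settles the existence claim in the parenthetical of the statement.

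Next, set $f = \mathbb{I}[x \in S]$ and apply \Cref{thm:subcube-stifling_upper_bound_for_codes}, which gives
\[
\mu(f) = \min\big(d(S), d(S^\perp)\big) - 1 \;\ge\; \delta n - 1 .
\]
For $n \ge 2/\delta$ this is at least $\delta n/2$, so $\mu(f) = \Omega(n)$. For a general code $S$ with $d(S) \ge c_1 n$ and $d(S^\perp) \ge c_2 n$, the same identity gives $\mu(f) \ge \min(c_1,c_2)\, n - 1 = \Omega(n)$ directly, so the argument does not depend on the specific $\delta$.

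There is no real obstacle, since all the work sits in the two cited results. The only points to check are bookkeeping. The asymptotic statement is meant for a family of codes indexed by $n$, with the $\Omega(n)$ constants uniform in $n$, which \Cref{thm: dim random} provides. The ``$-1$'' has to be absorbed for large $n$, and small $n$ can be ignored in the asymptotic claim.
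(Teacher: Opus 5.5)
Your proposal is correct and matches the paper, which obtains this corollary immediately by combining \Cref{thm: dim random} with \Cref{thm:subcube-stifling_upper_bound_for_codes}. Your handling of the $-1$ term, giving $\mu(f) \ge \min(c_1,c_2)\,n - 1 = \Omega(n)$, is the only bookkeeping required.
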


\subsection{Indicators of linear codes have large approximate degree}

We now show that indicators of (affine) linear codes have approximate degree at least the dual distance. The proof uses the dual polynomial method (\Cref{lem:dual}). Recall again that we identify the vector space $\mathbb{F}_2^n$ with $\zone^n$.

\begin{lemma}\label{lemma:approx_deg_of_linear_codes}
Let $S \leq \mathbb{F}_2^n$ and $f(x) = \mathbb{I}[x \in S]$. If $S \neq \mathbb{F}_2^n$, then for every $\varepsilon < 1/2$,
\begin{equation*}
    \widetilde{\deg}_{\varepsilon}(f) \geq d(S^\perp).
\end{equation*}
\end{lemma}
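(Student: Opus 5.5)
We will build a dual witness and apply \Cref{lem:dual}. Let $d^\perp := d(S^\perp)$ and let $\mu_S$ be the uniform distribution on $S$. We would like $\psi$ to be a signed combination that correlates with $f$ and has pure high degree at least $d^\perp$.

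The key fact is that the Fourier coefficients of $\mathbb{I}[x \in S]$ are supported on $S^\perp$. Indeed, for $T \subseteq [n]$ identified with its indicator vector $t$,
\[
\sum_{x\in S} \chi_T(x) = \sum_{x \in S} (-1)^{\langle x, t\rangle} = |S| \cdot \mathbb{I}[t \in S^\perp].
\]
Consequently, for every nonempty $T$ with $|T| < d^\perp$, we have $t \notin S^\perp$, and hence $\mathbb{E}_{x\sim \mu_S}[\chi_T(x)] = 0$. The same computation with the uniform distribution $U$ on $\mathbb{F}_2^n$ gives $\mathbb{E}_U[\chi_T] = 0$ for every nonempty $T$. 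Moreover, for $T = \emptyset$ both expectations equal $1$.

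Define
\[
\psi(x) := \tfrac12\big(\mu_S(x) - U(x)\big), \qquad \text{i.e.,}\quad \psi(x) = \tfrac12\Big(\tfrac{\mathbb{I}[x\in S]}{|S|} - \tfrac{1}{2^n}\Big).
\]
We verify the three conditions of \Cref{lem:dual}.

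\textbf{Pure high degree.} For every $T$ with $|T| < d^\perp$, including $T=\emptyset$, the two expectations above agree. Therefore $\sum_x \psi(x)\chi_T(x) = 0$ for all $|T| \le d^\perp - 1$.

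\textbf{L1 norm.} Since $\mu_S - U$ sums to zero, its $\ell_1$ norm equals twice its positive mass. The positive mass is $\sum_{x \in S}\big(1/|S| - 1/2^n\big) = 1 - |S|/2^n$. Because $S \ne \mathbb{F}_2^n$, we have $|S|\le 2^{n-1}$, so this quantity is positive, and the $\ell_1$ norm of $\psi$ is $1-|S|/2^n \in [1/2, 1)$. We normalize by setting $\psi' := \psi/\|\psi\|_1$.

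\textbf{Correlation.} We compute $\sum_x \psi(x) f(x) = \tfrac12\big(1 - |S|/2^n\big)$. After normalization, $\sum_x \psi'(x) f(x) = 1/2$.

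The correlation condition in \Cref{lem:dual} requires a value strictly greater than $\varepsilon$, and $1/2 > \varepsilon$ holds by hypothesis. Hence every polynomial of degree at most $d^\perp-1$ has $\varepsilon$-error strictly larger than $\varepsilon$, which gives $\adeg_\varepsilon(f)\ge d^\perp$.

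The main subtlety is that \Cref{lem:dual} is stated with $\Omega(d)$ rather than the exact bound. To obtain the exact bound $d^\perp$, we use the underlying weak-duality argument directly. Suppose $p$ has degree less than $d^\perp$. Then $\sum_x \psi'(x)p(x) = 0$, so
\[
\tfrac12 = \sum_x \psi'(x)\big(f(x) - p(x)\big) \le \|f-p\|_\infty.
\]
Thus $\|f-p\|_\infty \ge 1/2 > \varepsilon$, and $p$ is not an $\varepsilon$-approximating polynomial. I do not expect any other obstacle; the Fourier computation over $S$ is the heart of the argument.
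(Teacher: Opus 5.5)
Your proof is correct and takes the same approach as the paper. After normalization, your $\psi'$ is exactly the paper's witness $\frac{1}{2(1-\rho)}\bigl(\mathbb{I}[x\in S]/|S| - 2^{-n}\bigr)$, and your character-sum fact is what the paper proves via the shift-by-$y$ argument; your explicit weak-duality step also usefully justifies the exact bound $d(S^\perp)$, which the paper reads off from a lemma stated only with $\Omega(\cdot)$.
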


\begin{proof}
    We prove the claim by exhibiting a dual witness satisfying the conditions in \Cref{lem:dual}. Let $S\le \mathbb{F}_2^n$ be a linear code. Let $\rho:=\Pr_x[x\in S] = |S|/2^n$. The dual witness is defined as follows.
    \begin{equation*}
        \psi(x) := \frac{1}{2(1-\rho)}\left[ \frac{f(x)}{|S|} - \frac{1}{2^n} \right].
    \end{equation*}

    Now we show that $\psi$, as defined above, satisfies the L1 norm, correlation, and pure high degree conditions.
    \begin{itemize}
        \item \textbf{L1 norm:}
    \begin{equation*}
        \sum_{x\in \mathbb{F}_2^n}|\psi(x)| = \frac{1}{2(1-\rho)}\left[\sum_{x\in S}\left|\frac{1}{|S|} - \frac{1}{2^n}\right| + \sum_{x\not\in S}\left|\frac{1}{2^n}\right|\right] = \frac{1}{2(1-\rho)}\left[ 1-\frac{|S|}{2^n} + \frac{2^n - |S|}{2^n} \right] = 1.
    \end{equation*}

    \item \textbf{Correlation:}
    \begin{equation*}
        \sum_{x\in \mathbb{F}_2^n}\psi(x)f(x) = \sum_{x\in S}\psi(x) = \frac{1}{2(1-\rho)}\sum_{x\in S}\left[ \frac{1}{|S|} - \frac{1}{2^n} \right] = \frac{(1-\rho)}{2(1 - \rho)} = \frac{1}{2} > \varepsilon.
    \end{equation*}

    \item \textbf{Pure high degree:}
    Let $\alpha \in \mathbb{F}_2^n$ such that $\mathsf{wt}_H(\alpha) < d(S^\perp)$. Assume that $\alpha = 0^n$. Then, $\sum_x\psi(x)\chi_\alpha(x) = \sum_x\psi(x) = 0$.
    
    Now assume that $\alpha \neq 0^n$. Since $\mathsf{wt}_H(\alpha) < d(S^\perp)$, we know that $\alpha \not\in S^\perp$ and thus there exists some $y\in S$ such that $\chi_\alpha(y) = (-1)^{\langle x, y\rangle} = -1$. Then, since $S$ is a subspace,
    \begin{equation}\label{eq:zero}
        \sum_{x\in S}\chi_\alpha(x) = \sum_{x\in S}\chi_\alpha(x+y) = \sum_{x\in S}\chi_\alpha(x)\chi_\alpha(y) = -\sum_{x\in S}\chi_\alpha(x) \implies \sum_{x\in S}\chi_\alpha(x) = 0.
    \end{equation}

Using this, for nonzero $\alpha$ with $\mathsf{wt}_H(\alpha) < d(S^\perp)$, we get
\[
\sum_{x\in\mathbb{F}_2^n} \psi(x)\chi_\alpha(x)
=
\frac{1}{2(1-\rho)}
\left(
\frac{1}{|S|}\sum_{x\in S}\chi_\alpha(x)
-
\frac{1}{2^n}\sum_{x \in \mathbb{F}_2^n} \chi_\alpha(x)
\right)
=
\frac{1}{2(1-\rho)|S|}\sum_{x\in S}\chi_\alpha(x)
=
0.
\]
The first equality is substitution of the definition of $\psi$, the second equality uses $\sum_x\chi_\alpha(x)=0$ since $\alpha\neq 0^n$, and the final equality uses \Cref{eq:zero}.
    \end{itemize}

    Therefore, we get that $\widetilde{\deg}_{\varepsilon}(f) \geq d(S^\perp)$ where $\varepsilon < 1/2$.\qedhere
    
\end{proof}

As a corollary, we derive approximate degree lower bounds for indicators of (affine) subspaces.

\begin{corollary}\label{cor:mu_lower_bounds_approx_deg}
    Let $S\le \mathbb{F}_2^n$ and $a\in \mathbb{F}_2^n$. Let $f = \mathbb{I}[x \in a + S]$ be a Boolean function. Then,
    \begin{equation*}
        \adeg(f) = \Omega(\mu(f)).
    \end{equation*}
\end{corollary}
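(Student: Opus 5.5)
We reduce the affine case to the linear case. Then we combine \Cref{lemma:approx_deg_of_linear_codes} with the upper half of \Cref{thm:subcube-stifling_upper_bound_for_codes}.

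\emph{Translation invariance.} Let $g(x) := \mathbb{I}[x \in S]$, so that $f(x) = g(x + a)$.
\begin{itemize}
\item Approximate degree is unchanged. If $p$ approximates $g$, then $p(x\oplus a)$ approximates $f$, where each $x_i$ with $a_i=1$ is replaced by $1-x_i$. This substitution does not increase degree.
\item Subcube stifling number is unchanged. For $T\subseteq[n]$, $b\in\zone^T$ and $c\in\zone^{\bar T}$, we have $f_{T\vert c} = \mathbb{I}[x_T = b]$ iff $g_{T\vert c\oplus a_{\bar T}} = \mathbb{I}[x_T = b\oplus a_T]$. As $b$ ranges over $\zone^T$, so does $b\oplus a_T$.
\end{itemize}
Hence $\mu(f)=\mu(g)$ and $\adeg(f)=\adeg(g)$, and it suffices to treat $a=0$.

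\emph{Linear case.} If $S = \mathbb{F}_2^n$, then $f$ is constant. Every restriction is then constant, so it is never a point indicator. Thus $\mu(f)=0$ and the bound is trivial. (For $n\ge 1$ the restriction to $|T|=1$ cannot be a point indicator; the degenerate $n=0$ case is handled by convention.)

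Otherwise $S\neq\mathbb{F}_2^n$, and \Cref{lemma:approx_deg_of_linear_codes} with $\varepsilon=1/3$ gives $\adeg(f)\ge d(S^\perp)$. Since $S\ne\mathbb{F}_2^n$, the dual $S^\perp$ is nontrivial, so $d(S^\perp)$ is a finite positive integer. By \Cref{thm:subcube-stifling_upper_bound_for_codes},
\[
\mu(f)=\min(d(S),d(S^\perp))-1 < d(S^\perp).
\]
Combining, $\adeg(f)\ge d(S^\perp) > \mu(f)$, which gives $\adeg(f)=\Omega(\mu(f))$.

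\emph{Edge case.} The one point needing care is $S=\{0\}$. Then $S^\perp=\mathbb{F}_2^n$ and $d(S)$ is a minimum over the empty set, conventionally $\infty$. The formula still gives $\mu(f)=d(S^\perp)-1=0$, which matches the direct observation: $f=\mathsf{AND}$ after negating all inputs, so $\mu(f)=0$ as for monotone functions. In this case the bound holds trivially.

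Everything else is direct substitution. The only step needing care is verifying that translation by $a$ preserves $\mu$, which is the bijection between indicator targets described above.
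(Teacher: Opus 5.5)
Your proof is correct and follows the paper's argument. You reduce to the linear case by showing that translation by $a$ preserves both $\mu$ and $\adeg$ (the latter via the substitution $x_i \mapsto 1-x_i$ for $a_i=1$), then combine $\adeg(f)\ge d(S^\perp)$ from \Cref{lemma:approx_deg_of_linear_codes} with $\mu(f)\le d(S^\perp)-1$ from \Cref{thm:subcube-stifling_upper_bound_for_codes}. You also treat the degenerate cases $S=\mathbb{F}_2^n$ (where the lemma's hypothesis fails) and $S=\{0\}$ (where $d(S)$ is an empty minimum) explicitly, which the paper passes over silently.
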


\begin{proof}
    Let $f, g$ be Boolean functions defined as $f = \mathbb{I}[x \in S]$ and $g = \mathbb{I}[x \in a + S]$ where $S\le \mathbb{F}_2^n$ and $a\in \mathbb{F}_2^n$. First we show that $\mu(f) = \mu(g)$.

    Let $T\subseteq [n]$, $|T| \le k$ and $b \in \mathbb{F}_2^T$. We now show that there exists a setting to variables in $\bar{T}$ such that $g_{T \vert c_b'} = \mathbb{I}[x=b]$. This shows $\mu(g) \geq \mu(f)$. We know that there exists $c_b \in \mathbb{F}_2^{\bar{T}}$ such that $f_{T\vert c_b} = \mathbb{I}[x = b + a_T]$. 
    \begin{align*}
        &\mathbb{I}[x = b + a_T] = f_{T\vert c_b}(x) = f((x, c_b)) = g((x, c_b) + a),\\
        & \implies g((x + a_T, c_b + a_{\bar{T}})) = \mathbb{I}[x = b + a_T],\\
        & \implies g((x, c_b + a_{\bar{T}})) = g_{T\vert c_b+a_{\bar{T}}}(x) = \mathbb{I}[x = b].
    \end{align*}
    Since $T\subseteq [n]$, $|T| \le k$ and $b \in \mathbb{F}_2^T$ were chosen arbitrarily, we conclude $\mu(g) \geq \mu(f)$. Essentially the same argument shows that $\mu(f) \geq \mu(g)$. Combining this with \Cref{thm:subcube-stifling_upper_bound_for_codes} and \Cref{lemma:approx_deg_of_linear_codes}, we get that $\adeg(f) = \Omega(\mu(f))$ when Boolean function $f$ is the indicator of a subspace.
    
    To complete the proof, we need to show that $\adeg(f) = \adeg(g)$ for Boolean functions $f = \mathbb{I}[x \in R]$ and $g = \mathbb{I}[x\in a + R]$ where $R\subseteq \mathbb{F}_2^n$, $a\in\mathbb{F}_2^n$. Towards this, we first show that $\adeg(g) \le \adeg(f)$. Let $p$ be a polynomial that $1/3$-approximates $f$ and $\deg(p) = \adeg(f)$. Consider the polynomial $p' : \zone^n \to \mathbb{R}$ defined as follows (below, $+$ refers to addition over reals).
    \begin{equation*}
        p'(x_1,\ldots,x_n) = p(x_1+a_1-2x_1a_1,\ldots,x_n+a_n-2x_na_n).
    \end{equation*}
    This is clearly a degree-preserving transformation, and thus $\deg(p') \leq \deg(p)$. Moreover we have $p'(x) = p(x+a)$ (here $+$ is addition over $\mathbb{F}_2^n$) for all $x \in \zone^n$.
    Hence $p'$ is a $1/3$-approximation to $g$, which implies $\adeg(g) \leq \adeg(f)$. By similar arguments, we get $\adeg(f) \le \adeg(g)$.\qedhere
\end{proof}

Recall \Cref{conj}, which conjectures that $\adeg(f) = \Omega(\mu(f))$ for all Boolean functions $f$. Towards resolving the conjecture in the positive, one attempt might be to show the stronger bound $\lambda(f) = \Omega(\mu(f))$, which would immediately imply the conjecture by \Cref{thm:spectral_sensitivity_lower_bound_adeg}.
We show that this approach is not feasible by showing the existence of Boolean functions for which $\lambda(f) = O(\sqrt{n})$ and $\mu(f) = \Theta(n)$.

\begin{claim}\label{claim:spectral_sensitivity_of_linear_code}
    Let $S \leq \mathbb{F}_2^n$ be a linear code with distance at least $3$. Let $f:\{0,1\}^n\rightarrow\{0,1\}$ be the indicator function of $S$ i.e., $f(x) = \mathbb{I}[x\in S]$. Then, $\lambda(f) = \sqrt{n}$. 
\end{claim}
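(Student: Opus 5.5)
The plan is to describe the sensitivity graph $G$ of $f=\mathbb{I}[x\in S]$ explicitly and then compute the norm of its adjacency matrix $A_f$ directly. An edge $\{x,x+e_i\}$ lies in $G$ exactly when exactly one of $x$ and $x+e_i$ lies in $S$. There are no edges inside $S$: if $x\in S$ and $x+e_i\in S$, then $e_i\in S$, which contradicts $d(S)\ge 3$. Edges inside the complement of $S$ are irrelevant because $f$ takes the same value at both endpoints. Hence $G$ is bipartite with one side $S$ and the other side $N:=\{s+e_i : s\in S, i\in[n]\}$. Each $s\in S$ has degree exactly $n$, since every neighbour $s+e_i$ lies outside $S$.

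Next I will show that each $y\in N$ has exactly one neighbour in $S$. Suppose $s+e_i=s'+e_j$ with $s,s'\in S$ and $(s,i)\neq(s',j)$. Then $s+s'=e_i+e_j$ is a nonzero codeword of weight at most $2$, which contradicts $d(S)\ge 3$. (If $i=j$, then $s=s'$ as well, so no such collision occurs.) It follows that $G$ is a vertex-disjoint union of $|S|$ stars $K_{1,n}$, one centred at each codeword, together with isolated vertices. The distance-$3$ hypothesis is exactly the statement that the Hamming balls of radius $1$ around codewords are disjoint.

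Since $A_f$ is block diagonal with one block per star, $\norm{A_f}$ equals the largest norm among the blocks. The adjacency matrix of $K_{1,n}$ has norm $\sqrt{n}$: its square restricted to the centre is the scalar $n$, or equivalently the vector equal to $\sqrt{n}$ at the centre and $1$ at each leaf is an eigenvector with eigenvalue $\sqrt{n}$. Therefore $\lambda(f)=\sqrt{n}$, using $S\neq\emptyset$, which holds because $0\in S$.

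The only step requiring care is the collision argument in the second paragraph, namely that every non-codeword has at most one neighbour in $S$. The rest is bookkeeping.
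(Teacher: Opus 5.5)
Your proof is correct and follows essentially the paper's approach. The key fact is the same: since $d(S)\ge 3$, every codeword has all $n$ hypercube neighbours outside $S$, and every non-codeword has at most one neighbour in $S$, so the sensitivity graph is a disjoint union of stars $K_{1,n}$ (the paper states this picture in a footnote). The only difference is the final linear-algebra step: the paper bounds $\mathbf{v}^TA_f\mathbf{v}$ via $2ab\le \sqrt{n}\,a^2+b^2/\sqrt{n}$ plus degree counting and uses a test vector on the star at $0^n$ for the lower bound, whereas you read $\norm{A_f}=\sqrt{n}$ directly off the block-diagonal structure.
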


\begin{proof}
    Since $d(S) \geq 3$, we observe that for any pair of distinct $x,y \in S$, $\mathsf{wt}_H(x \oplus y) \geq 3$. As such, if we have $x \in S$ and $z \in \mathbb{F}_2^n$ such that $\mathsf{wt}_H(x \oplus z) = 1$, then for all $y \in S \setminus \{x\}$, by the reverse triangle inequality
    \[\mathsf{wt}_H(y \oplus z) \geq |\mathsf{wt}_H(y \oplus x) - \mathsf{wt}_H(x \oplus z)| \geq 3 - 1 = 2.\]
    This means that if $z \in \mathbb{F}_2^n$ is Hamming-distance $1$ away from a code word $x \in S$, then $x$ is the unique code word that is this close to $z$.\footnote{Intuitively, this implies that the subgraph induced on the Hamming cube where we only retain the edges that connect two vertices with different function values, falls apart into disjoint star graphs. This is the core observation that allows us to prove that the operator norm of the adjacency matrix is $\sqrt{n}$.}

    Now, let $\mathbf{v} \in \R^{\mathbb{F}_2^n}$. Recall that $A_f[x,y] = 1$ if and only if $f(x) \neq f(y)$ and $\mathsf{wt}_H(x \oplus y) = 1$. Thus, we observe that
    \[\mathbf{v}^TA_f\mathbf{v} = 2\sum_{\substack{(x,y) \in f^{-1}(0) \times f^{-1}(1) \\ \mathsf{wt}_H(x \oplus y) = 1}} v_xv_y = \sum_{\substack{(x,y) \in f^{-1}(0) \times f^{-1}(1) \\ \mathsf{wt}_H(x \oplus y) = 1}} 2(n^{1/4}v_x)\frac{v_y}{n^{1/4}} \leq \sum_{\substack{(x,y) \in f^{-1}(0) \times f^{-1}(1) \\ \mathsf{wt}_H(x \oplus y) = 1}} \sqrt{n}v_x^2 + \frac{v_y^2}{\sqrt{n}},\]
    where we used $2ab \leq a^2 + b^2$ for all $a,b \in \R$. Now, observe that every $y \in f^{-1}(1)$, i.e., $y \in S$, appears $n$ times in the above summation, since all its immediate neighbors, i.e., all bit strings $z \in \mathbb{F}_2^n$ such that $\mathsf{wt}_H(y \oplus z) = 1$, satisfy $z \not\in S$ and so $f(z) = 0$. Similarly, every $x \in f^{-1}(0)$ appears at most once in the summation, as we argued before that there is at most a unique $y \in S$ that is Hamming distance $1$ away from $x$. Thus, we obtain that
    \[
    \mathbf{v}^TA_f\mathbf{v} \leq \sqrt{n} \sum_{x \in f^{-1}(0)} v_x^2 + \frac{n}{\sqrt{n}} \sum_{y \in f^{-1}(1)} v_y^2 = \sqrt{n} \norm{\mathbf{v}}^2,
    \]
    and since this holds for all $\mathbf{v} \in \R^{\mathbb{F}_2^n}$, we obtain that $\lambda(f) = \norm{A_f} \leq \sqrt{n}$.

    It remains to show the lower bound. To that end, let $\mathbf{v} = \mathbf{e}_{0^n} + \sum_{j=1}^n \mathbf{e}_{e_j}/\sqrt{n}$, where $\mathbf{e}_x \in \R^{\mathbb{F}_2^n}$ is the standard basis vector labeled by $x \in \mathbb{F}_2^n$, and $e_j \in \mathbb{F}_2^n$ is the bit string that is $0$ everywhere except for the $j$th position. Now, observe that $\norm{\mathbf{v}}^2 = 2$, and so
    \[\lambda(f) = \norm{A_f} \geq \frac{\mathbf{v}^TA_f\mathbf{v}}{\norm{\mathbf{v}}^2} = \frac12 \cdot 2\sum_{\substack{(x,y) \in f^{-1}(0) \times f^{-1}(1) \\ \mathsf{wt}_H(x \oplus y) = 1}} v_xv_y = \sum_{j=1}^n v_{e_j}v_{0^n} = \sum_{j=1}^n \frac{1}{\sqrt{n}} = \sqrt{n}.\qedhere\]

\end{proof}

We conclude the values of several of the complexity measures for these types of indicator functions on linear codes.

\begin{corollary}\label{cor:mutightness}
    Let $S \leq \mathbb{F}_2^n$ be as in the statement of \Cref{thm: dim random}, and let $f : \zone^n \to \zone$ be defined as $f(x) = \mathbb{I}[x \in S]$. Then,
    \[
    \mathsf{s}(f) = O(\mu(f)), \qquad \deg(f) = O(\mu(f)), \qquad \lambda(f) = O(\sqrt{\mu(f)}).
    \]
\end{corollary}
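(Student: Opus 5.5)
Since $S$ is as in \Cref{thm: dim random}, both $d(S)$ and $d(S^\perp)$ are at least $\delta n$ for a constant $\delta \in (0, 0.11)$. \Cref{thm:subcube-stifling_upper_bound_for_codes} then gives $\mu(f) = \min(d(S), d(S^\perp)) - 1 \geq \delta n - 1 = \Omega(n)$. So it suffices to show $\mathsf{s}(f) = O(n)$, $\deg(f) = O(n)$ and $\lambda(f) = O(\sqrt{n})$, each of which will be divided by $\mu(f) = \Theta(n)$ (for large $n$, $\delta n - 1 \geq \delta n/2$).

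The first two bounds are trivial. Sensitivity at any input is at most the number of coordinates, so $\mathsf{s}(f) \leq n = O(\mu(f))$. The multilinear representing polynomial of any $n$-variate Boolean function has degree at most $n$, so $\deg(f) \leq n = O(\mu(f))$. Combined with \Cref{claim:bounds_due_to_OR}, both are in fact $\Theta(\mu(f))$.

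For spectral sensitivity we invoke \Cref{claim:spectral_sensitivity_of_linear_code}. Its hypothesis $d(S) \geq 3$ holds for large $n$ because $d(S) \geq \delta n$. It gives $\lambda(f) = \sqrt{n}$, and since $\mu(f) \geq \delta n/2$ we get $\sqrt{n} \leq \sqrt{2\mu(f)/\delta} = O(\sqrt{\mu(f)})$.

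There is no real obstacle here. The only care needed is making the implicit "sufficiently large $n$" explicit: it guarantees $d(S) \geq 3$ for the claim on spectral sensitivity, and it keeps the $-1$ in $\mu(f)$ harmless. For small $n$ everything is bounded by constants and is absorbed into the $O(\cdot)$.
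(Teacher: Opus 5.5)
Your proof is correct and follows the paper's argument essentially verbatim. You bound $\mu(f)=\Omega(n)$ via the distance characterization of $\mu$ for code indicators (the paper routes this through \Cref{cor:mulinearlb}), use the trivial bounds $\mathsf{s}(f),\deg(f)\le n$, and get $\lambda(f)\le\sqrt{n}$ from \Cref{claim:spectral_sensitivity_of_linear_code} once $d(S)\ge 3$, making the ``sufficiently large $n$'' conditions explicit where the paper leaves them implicit.
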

\begin{proof}
    Since $d(S) = \Omega(n)$ and $d(S^\perp) = \Omega(n)$, \Cref{cor:mulinearlb} implies $\mu(f) = \Omega(n)$. Clearly $s(f) \leq n = O(\mu(f))$ and $\deg(f) \leq n = O(\mu(f))$. Since $d(S) = \Omega(n) > 3$, \Cref{claim:spectral_sensitivity_of_linear_code} implies $\lambda(f) \leq \sqrt{n}$, which proves the last claimed inequality.\qedhere
\end{proof}

\subsection{Connections to binary covering arrays}

We begin this section with the following observation. The support of any Boolean function forms a binary covering array of strength $\mu(f)$.

\begin{definition}[Binary covering arrays]
    Let $N$, $k$ and $t$ be integers satisfying $N\ge 1$ and $1\le k\le t$. Then a \emph{binary covering array} of strength $k$ is a $N\times t$ $0/1$-matrix, denoted by $\mathsf{CA}(N; k,t,2)$, that has each element of $\{0,1\}^k$ appears at least once as a row in any $N\times k$ submatrix.
\end{definition}

See \cite{Lawrence_kacker_Lei_Kuhn_Forbes_2011} for a detailed treatment of binary covering arrays.

\begin{claim}\label{claim:supp_is_CA}
    Let $f:\{0,1\}^n\rightarrow \{0,1\}$ be a Boolean function with $\mu(f) = k$. Let $M$ be a matrix whose rows are all elements in $\mathsf{supp}(f)$. Then, the matrix $M$ is a $\mathsf{CA}(|\mathsf{supp}(f)|; k, n, 2)$ binary covering array.
\end{claim}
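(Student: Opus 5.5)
The plan is to verify the covering-array property directly from \Cref{def:ssnum}: every choice of $k$ columns of $M$, and every pattern in $\{0,1\}^k$, must appear in some row.

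Fix a set $T \subseteq [n]$ of $k = \mu(f)$ columns and a target pattern $b \in \{0,1\}^T$. Since $|T| \le \mu(f)$, the definition of the subcube stifling number gives a string $c \in \{0,1\}^{\bar T}$ with $f_{T\vert c} = \mathbb{I}[x_T = b]$. Evaluating this restriction at $x_T = b$ gives $f_{T\vert c}(b) = 1$. Hence the full input $z := (b,c) \in \{0,1\}^n$, with $z_T = b$ and $z_{\bar T} = c$, satisfies $f(z) = 1$, so $z \in \mathsf{supp}(f)$ and $z$ is a row of $M$.

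Restricting that row to the columns in $T$ gives $z_T = b$. So in the $|\mathsf{supp}(f)| \times k$ submatrix of $M$ on the columns $T$, the pattern $b$ occurs as a row. Since $T$ and $b$ were arbitrary, every $N \times k$ submatrix of $M$ contains every element of $\{0,1\}^k$ as a row, with $N = |\mathsf{supp}(f)|$. This is exactly the definition of $\mathsf{CA}(|\mathsf{supp}(f)|; k, n, 2)$.

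There is no real obstacle; the argument only uses the ``accept $b$'' half of the indicator condition, not the ``reject every $b' \neq b$'' half. The one point needing care is the definitional side condition $1 \le k \le t$. If $\mu(f) = 0$ the covering condition is vacuous, or the matrix is excluded by that side condition, and I would note this as a degenerate case. If $\mu(f) \ge 1$, the argument above shows $\mathsf{supp}(f)$ is nonempty, so $N \ge 1$ holds automatically.
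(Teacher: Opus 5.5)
Your proposal is correct and is essentially the paper's argument. For each column set $S$ and pattern $b$, the restriction $c_b$ given by the definition of $\mu$ puts $(b,c_b)$ in $\mathsf{supp}(f)$, and that row restricts to $b$ on $S$; the paper phrases this as $\mathsf{rows}(M_S)=\bigcup_{c}\mathsf{supp}(f_{S\vert c})\supseteq\{0,1\}^S$. Your remarks on the degenerate case $\mu(f)=0$ and on needing only the ``accept $b$'' half of the indicator condition are accurate additions.
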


\begin{proof}
    The rows of matrix $M$ are indexed by all binary strings in $\mathsf{supp}(f)$ and the columns are indexed by elements in $[n]$. The $(x,i)$ entry of $M$ is the $i$'th bit of the binary string $x\in f^{-1}(1)$. Let $S\subseteq [n]$, $|S| \le k$ and $M_S$ denote the submatrix obtained by selecting the columns whose index is in $S$. Then,
    \begin{equation*}
        \mathsf{rows}(M_S) = \bigcup\limits_{c\in\{0,1\}^{\bar{S}}}\mathsf{supp}(f_{S\vert c}).
    \end{equation*}
    Since, $\mu(f) = k$, we know that, $\forall b\in \{0,1\}^{S}$ $\exists c_b\in\{0,1\}^{\bar{S}}$ such that $\mathsf{supp}(f_{S\vert c_b}) = \{b\}$. This implies that $\mathsf{rows}(M_S) = \{0,1\}^S$ and therefore the matrix $M$ is a $|\mathsf{supp}(f)|\times n$-binary covering array of strength $|S|$.\qedhere
\end{proof}

\begin{definition}[Binary orthogonal array]
    Let $N$, $k$ and $t$ be integers satisfying $N\ge 1$ and $1\le k\le t$. Then a binary orthogonal array of strength $k$ is a $N\times t$ $0/1$-matrix, denoted by $\mathsf{OA}_\lambda(N; k,t,2)$, that has each of the entries in $\{0,1\}^t$ appear exactly $\lambda$ times in the rows of any $N\times k$ submatrix.
\end{definition}
Next, we observe that the support of the indicator function on a linear subspace forms a binary orthogonal array of strength $\mu(f)$. We prove this by combining \Cref{claim:supp_is_CA} with \Cref{claim:linear_supp_is_OA}.

The following claim appears in a discussion in \cite[Pg~304]{Hedayat_Sloane_Stufken_1999}. We provide a short proof for the sake of completeness.

\begin{claim}\label{claim:linear_supp_is_OA}
    Let $M$ be a $\mathsf{CA}(|R|;k,n,2)$ binary covering array such that $x,y\in \mathsf{rows}(M) \implies x\oplus y \in \mathsf{rows}(M)$. Then $M$ is also a $\mathsf{OA}_\lambda(|R|;k,n,2)$ binary orthogonal array.
\end{claim}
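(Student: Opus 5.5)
The plan is to exploit the group structure of the row set. Let $R := \mathsf{rows}(M)$. Assume, as in the intended application, that $R$ is nonempty. Since $R$ is closed under $\oplus$, it contains $x \oplus x = 0^n$ for any $x \in R$, so $R$ is a linear subspace of $\mathbb{F}_2^n$. Fix any column set $T \subseteq [n]$ with $|T| = k$. Consider the projection $\pi_T : R \to \mathbb{F}_2^T$, $x \mapsto x_T$, in the notation of the previous subsection. This map is $\mathbb{F}_2$-linear.

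\textbf{Step 1 (surjectivity).} The covering array hypothesis says that every $b \in \zone^T$ appears as some row of the submatrix $M_T$. That is exactly the statement that $\pi_T$ is surjective onto $\mathbb{F}_2^T$.

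\textbf{Step 2 (equal fibres).} For a surjective linear map, every fibre $\pi_T^{-1}(b)$ is a coset $s_b + \ker \pi_T$, where $s_b \in R$ is any element with $(s_b)_T = b$. Hence every fibre has size exactly $|\ker \pi_T| = |R|/2^k$. Concretely, $x \mapsto x \oplus s_b$ is a bijection from $\ker\pi_T$ onto $\pi_T^{-1}(b)$, and this uses only closure of $R$ under $\oplus$.

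\textbf{Step 3 (conclusion).} Every $b \in \zone^T$ therefore appears exactly $\lambda := |R|/2^k$ times among the rows of $M_T$. This $\lambda$ does not depend on $T$, as long as $|T| = k$. For $|T| < k$, the same argument applies directly, since covering of strength $k$ implies covering of smaller strength. So $M$ is an $\mathsf{OA}_\lambda(|R|;k,n,2)$.

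The only real subtlety is interpretive rather than technical. Rows must be distinct, with $M$ listing the set $R$ once each, and the definition's ``each of the entries in $\{0,1\}^t$'' should be read as entries of $\{0,1\}^k$. I would state both readings explicitly before Step 1. Nothing in the argument is expected to be hard; Step 2 is the crux, and it is the standard coset-counting fact.
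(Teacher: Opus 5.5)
Your proof is correct and is essentially the paper's argument. The paper fixes $\lambda$ as the multiplicity of $0^S$ and uses the translate $R_x=\{x\oplus y : y\in R\}\subseteq R$ to show that every pattern $x_S$ has that same multiplicity; this is your bijection $\ker\pi_T\to\pi_T^{-1}(b)$, $x\mapsto x\oplus s_b$, with the covering property giving surjectivity of $\pi_T$. Your version also pins down $\lambda=|R|/2^k$ and states the assumption that $M$ lists each element of $R$ exactly once, which the claim needs and which the paper uses only implicitly.
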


\begin{proof}
    Let $R$ be the set of bit strings formed by the rows in the binary covering array. Let $S\subseteq [n]$, $|S| = t$. Let $\lambda$ be the multiplicity of $0^S$ in $R$, that is, the number of rows in the table that are all-zeros when restricted to $S$. Let $x\in R$ such that $x_S \neq 0^t$ (such a $x$ is guaranteed to exists since the rows form a binary covering array of strength $t$). Define the set $R_x$ as follows.

    \begin{equation*}
        R_x := \{ x\oplus y \mid y\in R \} \subseteq R, \qquad (\text{by linearity}).
    \end{equation*}

    Note that the multiplicity of $x_S$ in $R_x$ is $\lambda$ and hence its multiplicity in $R$ is at least $\lambda$. If the multiplicity in $R$ is greater than $\lambda$, then $0^S$ has the same multiplicity in $R_x$. This is a contradiction and therefore, multiplicity of $x_S$ in $R$ is exactly $\lambda$.

    Observing that the above arguments hold for any arbitrary $x_S\in\{0,1\}^t\setminus\{0^t\}$ completes the proof.\qedhere
    
\end{proof}

\begin{theorem}[{\cite[Theorem~4.6]{Hedayat_Sloane_Stufken_1999}}]\label{thm:linear_code_equals_OA}
    If $S$ is a $[n,t,d]_2$ binary linear code with dual distance $d^{\perp}$, then the codewords of $C$ form the rows of a $\mathsf{OA}_\lambda(n;d^{\perp}-1,t,2)$ binary orthogonal array. Conversely, the rows of a linear $\mathsf{OA}_\lambda(n;k,t,2)$ binary orthogonal array form a $[n,t,d]_2$ binary linear code with dual distance $d^{\perp} = k+1$.
\end{theorem}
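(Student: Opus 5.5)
We prove both directions separately. Throughout, view the code $S$ as the multiset of rows of a $|S|\times n$ matrix whose rows are the codewords. The parameters in the statement are read in the natural way: $N=|S|=2^{\dim S}$ rows and $n$ columns, with strength $d^\perp-1$ and $\lambda = |S|/2^{d^\perp-1}$.

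\textbf{Forward direction.} Fix $T\subseteq[n]$ with $|T|\le d^\perp-1 < d(S^\perp)$. By \Cref{cor:surj}, the projection $\pi_T : S\to\mathbb{F}_2^T$ is surjective. Since $\pi_T$ is an $\mathbb{F}_2$-linear map of groups, every fibre $\pi_T^{-1}(b)$ is a coset of $\ker\pi_T$. Hence every $b\in\{0,1\}^T$ appears exactly $|\ker \pi_T| = |S|/2^{|T|}$ times as a row of the $N\times |T|$ submatrix. This is precisely the orthogonal array property of strength $d^\perp-1$, with index $\lambda=|S|/2^{d^\perp-1}$ when $|T|=d^\perp-1$.

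An alternative route to the same conclusion combines the earlier results. \Cref{thm:subcube-stifling_upper_bound_for_codes} shows that $\mu\ge \min(d,d^\perp)-1$. \Cref{claim:supp_is_CA} then gives a covering array, and \Cref{claim:linear_supp_is_OA} upgrades it to an orthogonal array. However, this only reaches strength $\min(d,d^\perp)-1$, so the direct fibre-counting argument is preferable because it gives the full strength $d^\perp-1$.

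\textbf{Converse direction.} Suppose the rows form a linear $\mathsf{OA}_\lambda$ of strength $k$, meaning the row set is closed under $\oplus$ and hence is a subspace $S$.
\begin{itemize}
\item \emph{$d^\perp\ge k+1$.} For every $T$ with $|T|\le k$, every pattern in $\{0,1\}^T$ appears among the rows, so $\pi_T$ is surjective. By \Cref{lem:surj}, no nonzero dual codeword is supported on $T$. Therefore every nonzero codeword of $S^\perp$ has weight greater than $k$.
\item \emph{$d^\perp\le k+1$.} This equality requires $k$ to be the exact (maximal) strength, which is the implicit convention in the statement. If the strength is exactly $k$, some $T$ with $|T|=k+1$ has a non-surjective projection, either because a pattern is missing or because counts are unequal. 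For a linear array, unequal counts force non-surjectivity: surjective linear projections have equal fibres by the same coset argument as above. \Cref{lem:surj} then yields a dual codeword of weight at most $k+1$.
\end{itemize}

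\textbf{Main obstacle.} The mathematics above is short. The main difficulty is the parameter bookkeeping, since the statement's $[n,t,d]$ and $\mathsf{OA}(n;\cdot,t,2)$ notation appears to swap the roles of $n$ and $t$. It also requires interpreting ``strength $k$'' as the maximal strength. We would fix these conventions explicitly before giving the two arguments above.
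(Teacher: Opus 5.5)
The paper gives no proof of this statement. It imports the result from Hedayat--Sloane--Stufken, with somewhat garbled parameters, so your argument is a genuine addition rather than a reconstruction. It is correct, and you read the statement the right way. The array has $|S|$ rows of length $n$: the $[n,t,d]_2$ versus $\mathsf{OA}_\lambda(n;\cdot,t,2)$ notation does swap roles, and $C$ should be $S$. In the converse one only gets $d^\perp\ge k+1$ in general, with equality exactly when $k$ is the maximal strength, which is how the result is normally stated.

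Your forward direction combines \Cref{cor:surj} with the fact that fibres of a surjective linear map are cosets of its kernel. This gives the index $\lambda=|S|/2^{d^\perp-1}$ at no extra cost. In the converse, the same coset observation correctly shows that, for a linear array, failure of strength $k+1$ can only come from a missing pattern. \Cref{lem:surj} then produces a dual codeword of weight at most $k+1$.

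The usual textbook route works instead with a generator matrix $G$ of $S$. There, $\pi_T$ is surjective iff the columns of $G$ indexed by $T$ are linearly independent. Since $G$ is a parity-check matrix for $S^\perp$, every $d^\perp-1$ columns are independent. That is the same linear algebra in matrix language. What your version buys is self-containedness, since it reuses \Cref{lem:surj} and \Cref{cor:surj}, which the paper already proves. The citation buys brevity and the $q$-ary generality, though your argument would give that verbatim too, with fibres of size $|S|/q^{|T|}$.

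Your side remark is also accurate. Going through \Cref{claim:supp_is_CA} and \Cref{claim:linear_supp_is_OA} only certifies strength $\mu(f)=\min(d,d^\perp)-1$. That matches what the paper's own sentence before \Cref{claim:linear_supp_is_OA} claims, and it is strictly weaker than $d^\perp-1$ whenever $d<d^\perp$.
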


We note that the original statement of \Cref{thm:linear_code_equals_OA} shows the equivalence of linear codes and orthogonal arrays over $\mathbb{F}_q$. We choose to present it for $\mathbb{F}_2$ since that is all we require. Combining \Cref{thm:linear_code_equals_OA} with \Cref{lemma:approx_deg_of_linear_codes} gives us the following corollary.

\begin{corollary}\label{cor:adeg_lower_bound_OA}
    Let $M$ be a $\mathsf{OA}_\lambda(N;k,n,2)$ be a binary orthogonal array. Let $f:\{0,1\}^n\rightarrow\{0,1\}$ be an indicator function of rows in the matrix $M$. Then,
    \begin{equation*}
        \adeg(f) = \Omega(\mu(f)).
    \end{equation*}
\end{corollary}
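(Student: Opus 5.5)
The plan is to build a dual witness (\Cref{lem:dual}) straight from the orthogonal-array structure, without linearity, and then compare the strength $k$ with $\mu(f)$. Let $M$ be the $\mathsf{OA}_\lambda(N;k,n,2)$, $R\subseteq\zone^n$ its set of distinct rows, $m(x)$ the multiplicity of $x$ among the rows, and $\rho:=|R|/2^n$. Define
\[
\psi(x):=\frac{1}{Z}\left(\frac{m(x)}{N}-\frac{1}{2^n}\right),
\]
where $Z$ normalizes $\sum_x|\psi(x)|$ to $1$.

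\textbf{Pure high degree.} Take $0<|\alpha|\le k$. Restricted to any $k$ columns containing $\mathsf{supp}(\alpha)$, the rows of $M$ are uniformly distributed, since each pattern occurs exactly $\lambda$ times. Hence $\sum_x m(x)\chi_\alpha(x)=0$, and also $\sum_x\chi_\alpha(x)=0$. For $\alpha=0$ both terms of $\psi$ have total mass one, so the sum is again $0$. This is the linear-code computation of \Cref{lemma:approx_deg_of_linear_codes} with the subspace-averaging step replaced by the defining property of an OA.

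\textbf{Correlation and norm.} The correlation is $\frac1Z\bigl(1-\rho\bigr)$. The unnormalized $\ell_1$ mass is at most $2$, so $Z\le 2$ and the correlation is at least $(1-\rho)/2$. This gives $\adeg_\varepsilon(f)\ge k$ for every $\varepsilon<(1-\rho)/2$.

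\textbf{Controlling $\rho$.} This is needed to convert the bound to $\adeg=\adeg_{1/3}$. I would first try to show $\rho\le 1/2$ whenever $\mu(f)\ge 1$. The argument I have in mind is that each restriction realizing a point indicator on a single coordinate forces a partner string outside $R$. However, that only produces one excluded neighbour per witness, so it does not obviously bound the density. If this fails, I would fall back on a sign-flip symmetry of $R$ inherited from the OA, or pass to the complement of $f$, which preserves $\adeg$ exactly but does not preserve the OA property; this fallback therefore also needs care. Once $\rho$ is bounded away from $1$, standard error reduction (composing with a constant-degree amplifying univariate polynomial) turns the bound into $\adeg(f)=\Omega(k)$.

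\textbf{Main obstacle.} The step I expect to be hardest, and on which the statement hinges, is relating the strength $k$ to $\mu(f)$, i.e.\ showing $\mu(f)=O(k)$. The converse direction is easy: by \Cref{claim:supp_is_CA}, $R$ is a covering array of strength $\mu(f)$. Obtaining $\mu(f)=O(k)$ requires the reverse comparison. The approach I would try first is to take $k$ to be the \emph{maximal} strength of $M$, pick a set $T$ of $k+1$ columns on which the row projection is non-uniform, and try to exhibit some $b\in\zone^{T}$ that cannot be isolated, i.e.\ for which no fixing $c$ of $\bar T$ has $f_{T\vert c}=\mathbb{I}[x_T=b]$. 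That would give $\mu(f)\le k$, and together with the witness above, $\adeg(f)\ge k\ge\mu(f)$.

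For linear arrays this is exactly what happens. \Cref{thm:linear_code_equals_OA} gives $k=d^\perp-1$, and the upper-bound argument of \Cref{thm:subcube-stifling_upper_bound_for_codes} gives $\mu(f)\le d^\perp-1$. I do not currently see how to produce the non-isolatable $b$ for a general non-linear OA. There I would try to extract it from the Fourier coefficient $\widehat{m}(\alpha)\neq 0$ with $|\alpha|=k+1$ that witnesses non-uniformity, by averaging $f_{T\vert c}$ over $c$ against $\chi_\alpha$. I flag this as the step most likely to need an additional hypothesis.
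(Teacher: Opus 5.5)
\textbf{Gap: the comparison $\mu(f)=O(k)$.} This is the step you flag, and it cannot be repaired in this generality. For non-linear orthogonal arrays the strength of $M$ does not control $\mu(f)$. So neither a non-uniform $(k+1)$-set nor averaging against a non-vanishing $\widehat m(\alpha)$ with $|\alpha|=k+1$ can yield a non-isolatable $b$.

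Here is a concrete example. Take $C\le\mathbb{F}_2^n$ from \Cref{thm: dim random} with $\dim C=n/2$, pick $z$ with $z,\,z\oplus 1^n\notin C$, and let $M$ list $R=C\cup\{z,z\oplus 1^n\}$, each row once.
\begin{itemize}
\item Every column is balanced. On any pair of columns $\{i,j\}$, however, the pattern $z_{\{i,j\}}$ and its complement each occur once more than the other two patterns. So the maximal strength is $k=1$, and $\widehat m(\alpha)\neq 0$ for every $|\alpha|=2$.
\item Now take $|T|\le\delta n-1$ and any $b\in\zone^T$. By \Cref{cor:surj} and \Cref{cor:inj} there are $|C|/2^{|T|}\ge 3$ fixings $c$ with $(b,c)\in C$, each isolating $b$ inside $C$. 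The two extra points spoil at most two of these fixings.
\end{itemize}
Hence $\mu(\mathbb{I}_R)=\Omega(n)\gg k$, and every $b$ on every non-uniform pair of columns is isolatable. The conclusion $\adeg(\mathbb{I}_R)=\Omega(n)$ does still hold here: subtract $O(\sqrt n)$-degree approximations of the two extra point indicators and apply \Cref{lemma:approx_deg_of_linear_codes} to $C$. So it is your route through $k$ that fails, not necessarily the statement.

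\textbf{What the paper does.} The paper's argument covers only the case you already handle. It uses the converse of \Cref{thm:linear_code_equals_OA}, which is stated for \emph{linear} orthogonal arrays. Then the rows form a linear code $S$ with $d(S^\perp)=k+1$. \Cref{lemma:approx_deg_of_linear_codes} gives $\adeg(f)\ge d(S^\perp)$, and \Cref{thm:subcube-stifling_upper_bound_for_codes} (left implicit in the paper) gives $\mu(f)\le d(S^\perp)-1$. So the missing ingredient is the linearity hypothesis. With it, your argument is the paper's argument. Your multiplicity-weighted $\psi$ is the witness of \Cref{lemma:approx_deg_of_linear_codes} extended to arbitrary arrays, and that part is correct.

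\textbf{The $\rho$ step.} Your ``controlling $\rho$'' step is an artifact of the crude bound $Z\le 2$. With distinct rows, $\sum_x\bigl|\mathbb{I}_R(x)/|R|-2^{-n}\bigr|=2(1-\rho)$ exactly, so the correlation is exactly $1/2>1/3$ with no density assumption. The first idea you propose there is false anyway: $\mathbb{I}[x\notin\{0^n,1^n\}]$ has $\mu\ge 1$ and $\rho=1-2^{1-n}$.
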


\section*{Acknowledgments}

The authors gratefully acknowledge Subhasree Patro for helpful discussions, and OpenAI’s ChatGPT (GPT-5.4 and GPT-5.5), which assisted in developing several crucial ideas presented in this work. The authors bear full responsibility for any mistakes in the work. NR thanks Benjamin Jany and Mattia Cipro for helpful discussions. AC is supported by a Simons-CIQC postdoctoral fellowship through NSF QLCI Grant No.~2016245 and NR acknowledges the support from the Dutch Ministry of Education, Culture, and Science through Gravitation project ``Challenges in Cyber Security - 024.006.037” for this work.

\bibliography{references}

\appendix

\section{Proof of \Cref{thm: dim random}}

Below is a standard fact about sums of binomial coefficients.
\begin{fact}\label{fact: binom}
    Let $H_2(x) = -x\log x - (1-x)\log(1-x)$ denote the binary entropy function. For all positive integers $n$, $k \leq n/2$ we have
    \[
    \sum_{i = 0}^k \binom{n}{i} \leq 2^{H_2(k/n)\cdot n}.
    \]
\end{fact}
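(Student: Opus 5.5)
The statement to prove is the binomial-sum bound $\sum_{i=0}^k \binom{n}{i} \le 2^{H_2(k/n) n}$ for $k \le n/2$. I will use the standard entropy/probabilistic trick. Set $p := k/n \in [0,1/2]$ and dispose of the edge case $p=0$ first: then the left side is $1 = 2^0$. For $0 < p \le 1/2$, start from the binomial identity
\[
1 = (p + (1-p))^n = \sum_{i=0}^n \binom{n}{i} p^i (1-p)^{n-i} \ge \sum_{i=0}^k \binom{n}{i} p^i (1-p)^{n-i}.
\]

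The key step is to lower bound each term of the truncated sum uniformly. Write $p^i(1-p)^{n-i} = (1-p)^n \bigl(\tfrac{p}{1-p}\bigr)^i$. Since $p \le 1/2$, the ratio $p/(1-p)$ is at most $1$, so this quantity is nonincreasing in $i$. Hence for every $i \le k$ it is at least its value at $i=k$, namely $p^k(1-p)^{n-k}$. This monotonicity is the only place where the hypothesis $k \le n/2$ enters, and it is the one step that needs care.

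Substituting this bound gives
\[
1 \ge p^k (1-p)^{n-k} \sum_{i=0}^k \binom{n}{i}, \quad\text{so}\quad \sum_{i=0}^k \binom{n}{i} \le p^{-k}(1-p)^{-(n-k)}.
\]

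It then remains to identify the right-hand side. Taking $\log$ base $2$, and using $k = pn$, $n-k = (1-p)n$, gives $-pn\log p - (1-p)n\log(1-p) = H_2(p)\,n$. The right-hand side is therefore exactly $2^{H_2(k/n)n}$, which completes the argument.
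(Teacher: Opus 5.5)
Your proof is correct and complete. It is the standard argument: expand $(p+(1-p))^n$ with $p=k/n$, use that $p^i(1-p)^{n-i}$ is nonincreasing in $i$ when $p\le 1/2$, and identify $p^{-k}(1-p)^{-(n-k)}=2^{H_2(p)n}$.

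The paper gives no proof of this fact; it quotes it as standard before the appendix proof of the random-code theorem, so there is no different route to compare against. One small bonus of your argument is that it works unchanged for non-integer $k$: sum over $i\le pn$ and compare each term with the value at $t=pn$. This is exactly how the fact is applied there, with upper summation limit $\delta n$.
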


\begin{proof}[Proof~of~\Cref{thm: dim random}]
For simplicity we assume $n$ to be even; the whole argument can be easily made to work with $n$ odd by using ceilings and floors appropriately. Let $k= n/2$, and choose $S\le \mathbb{F}_2^n$ uniformly at random among all $k$-dimensional subspaces. We first bound the probability that $S$ contains a nonzero vector of small Hamming weight. Fix $x\in\zone^n\setminus\cbra{0}$. By symmetry,
\[
\Pr[x\in S]=\frac{2^k-1}{2^n-1}\le 2^{k-n}.
\]
Therefore,
\begin{align*}
\Pr[d(S)<\delta n] & \leq \sum_{\substack{x\in\zone^n\setminus\{0\}\\|x|<\delta n}} \Pr[x\in S], \tag*{by a union bound}\\
& \leq \left(\sum_{i = 0}^{\delta n}\binom{n}{i}\right)2^{k-n},\\
& \le 2^{H_2(0.11)\cdot n}\cdot 2^{-n/2}, \tag*{by Fact~\ref{fact: binom}}\\
& < 1/2,
\end{align*}
where the last inequality follows since $n$ is sufficiently large and $H_2(0.11) \approx 0.4999 < 1/2$.

Since $S$ was a random $n/2$-dimensional subspace of $\zone^n$, this means $S^\perp$ is also a random $n/2$-dimensional subspace of $\zone^n$. Using the same argument, we conclude that $\Pr[d(S^\perp)<\delta n] < 1/2$.
Finally, another union bound gives
\[
\Pr\left[d(S)<\delta n\ \text{or}\ d(S^\perp)<\delta n\right] < 1.
\]
Thus, for all sufficiently large $n$, there exists a $n/2$-dimensional subspace $S\leq\zone^n$ such that
\[
d(S)\geq \delta n \qquad\text{and}\qquad d(S^\perp)\geq \delta n.
\]
This completes the proof.\qedhere
\end{proof}

\end{document}